\theoremstyle{plain}
\newtheorem{observation}[theorem]{Observation}
\newcommand{\RemoveAlgoNumber}{\renewcommand{\fnum@algocf}{\AlCapSty{\AlCapFnt\algorithmcfname}}}
\newcommand{\RevertAlgoNumber}{\algocf@resetfnum}
\DeclareMathOperator{\poly}{poly}
\newcommand{\floor}[1]{\left\lfloor #1 \right\rfloor}
\newcommand{\ceil}[1]{\left\lceil #1 \right\rceil}
\newcommand{\set}[1]{\left\{ #1 \right\}}
\newcommand{\abs}[1]{\left| #1 \right|}
\newcommand{\inparen}[1]{\left( #1 \right)}
\newcommand{\inbrace}[1]{\left\{ #1 \right\}}
\newcommand{\pfrac}[2]{\inparen{\frac{1}{2}}}
\newcommand{\setdef}[2]{\inbrace{ #1 \ : \ #2}}
\newcommand{\NWify}[2]{#1\llbracket #2 \rrbracket_{\mathrm{NW}}}
\newcommand{\ilog}[1]{\log^{\circ #1}}
\newcommand{\F}{\mathbb{F}}
\newcommand{\VNP}{\mathsf{VNP}}
\newcommand{\veca}{\mathbf{a}}
\newcommand{\vecx}{\mathbf{x}}
\newcommand{\vecy}{\mathbf{y}}
\newcommand{\vecz}{\mathbf{z}}
\newcommand{\shortECCC}[2]{\texttt{\href{http://eccc.hpi-web.de/report/\ifnumcomp{#1}{>}{93}{19}{20}#1/#2/}{eccc:TR#1-#2}}}
\newcommand{\parseECCC}[1]{%
\StrSubstitute{#1}{TR}{}[\tmpstring]%
\IfSubStr{\tmpstring}{/}{ %
\StrBefore{\tmpstring}{/}[\ecccyear]%
\StrBehind{\tmpstring}{/}[\ecccreport]%
}{%
\StrBefore{\tmpstring}{-}[\ecccyear]%
\StrBehind{\tmpstring}{-}[\ecccreport]%
}%
\shortECCC{\ecccyear}{\ecccreport}}
\begin{document}

\begin{frontmatter}

\title{Near-Optimal Bootstrapping of Hitting Sets\\for Algebraic Models\titlefootnote{A preliminary version of this paper appeared in the 
\href{https://doi.org/10.1137/1.9781611975482.40}{%
Proceedings of the 30th Annual ACM-SIAM Symposium on Discrete Algorithms
(SODA 2019)} \cite{conf-version}.}}

\author[mrinal]{Mrinal Kumar\thanks{Tata Institute of Fundamental Research, Mumbai, India. A part of this work was done during postdoctoral stays at Simons Institute for the Theory of Computing, Berkeley, USA, Center for Mathematical Sciences and Applications at Harvard, and while visiting TIFR, Mumbai.}}
\author[ramprasad]{Ramprasad Saptharishi\thanks{Tata Institute of Fundamental Research, Mumbai, India. Research supported by the Department of Atomic Energy, Government of India, under project 12-R\&D-TFR-5.01-0500}}
\author[anamay]{Anamay Tengse\thanks{Reichman University, Herzliya. A major part of this work was done as a student at TIFR, Mumbai, when the author was supported by a fellowship of the DAE, Govt. of India.}}

\begin{abstract}
The %
Polynomial Identity Lemma  %
(also called the ``Schwartz--Zippel lemma'') states that any nonzero polynomial $f(x_1,\ldots, x_n)$ of degree at most $s$ will evaluate to a nonzero value at some point on %
any  %
grid $S^n \subseteq \F^n$ with $\abs{S} > s$.
  Thus, there is an explicit \emph{hitting set} for all $n$-variate degree-$s$, size-$s$ algebraic circuits  of size $(s+1)^n$.

  In this paper, we prove the following results:
  \begin{itemize}
    
  \item Let $\epsilon > 0$ be a constant.
  For a sufficiently large constant $n$, and all $s > n$, if we have an explicit hitting set of size $(s+1)^{n-\epsilon}$ for the class of $n$-variate degree-$s$ polynomials that are computable by algebraic circuits of size~$s$, then for all large $s$, we have an explicit hitting set of size $s^{\exp(\exp (O(\log^\ast s)))}$ for $s$-variate circuits of degree~$s$ and size~$s$.

That is, if we can obtain a barely non-trivial exponent 
(a factor-$s^{\Omega(1)} $  %
improvement) compared to the trivial %
$(s+1)^{n}$-size  %
hitting set even for
constant-variate %
circuits, we can get an almost complete derandomization of PIT.
    
    \item The above result holds when ``circuits'' are replaced by ``formulas'' or ``algebraic branching programs.''  
  \end{itemize}

  This extends a recent surprising result of Agrawal, Ghosh and Saxena~(STOC 2018, PNAS 2019) who proved the same conclusion for the class of algebraic circuits, if the hypothesis provided a hitting set of size at most $\inparen{s^{n^{0.5 - \delta}}}$ (where $\delta>0$ is any constant).
  Hence, our work significantly weakens the hypothesis of Agrawal, Ghosh and Saxena to  only require a slightly non-trivial saving over the trivial hitting set, and also presents  the first such result for algebraic formulas.
\end{abstract}

\end{frontmatter}

\section{Introduction}

Multivariate polynomials are the primary protagonists in the field of algebraic complexity and algebraic circuits form a natural robust model of computation for multivariate polynomials.
For completeness, we now define algebraic circuits : an algebraic circuit is  a directed acyclic graph with internal gates labeled by $+$ (addition) and $\times$ (multiplication), and with leaves labeled by either variables or field constants; computation flows in the natural way.

In the field of algebraic complexity, much of the focus has been restricted to studying $n$-variate polynomials whose degree is bounded by a polynomial function in $n$, and such polynomials are called \emph{low-degree polynomials}.
This restriction has several \emph{a priori} and \emph{a posteriori} motivations, and excellent discussions of this can be seen in the thesis of Forbes~\cite[Section 3.2]{F14} and Grochow's answer~\cite{G13} on \texttt{cstheory.SE}.
The central question in algebraic complexity is to find a family of low-degree polynomials that requires large algebraic circuits to compute it.
Despite having made substantial progress in various subclasses of algebraic circuits %
(see, \eg, the  %
surveys \cite{SY10,S15}), the current best lower bound for general algebraic circuits is merely an $\Omega(n \log d)$ lower bound of Baur and Strassen~\cite{BS83}.

An interesting approach towards proving lower bounds for algebraic circuits is via showing good \emph{upper bounds} for the \emph{algorithmic} task of \emph{polynomial identity testing.}
Our results in this paper deal with this problem, and we elaborate on this now.

\subsection{Polynomial Identity Testing}

Polynomial identity testing (PIT\footnote{We use the abbreviation PIT for both the noun `polynomial identity test' and gerund/adjective `polynomial identity testing'. The case would be clear from context.}) is the algorithmic task of checking if a given algebraic circuit $C$ of size~$s$ computes the identically zero polynomial.
As discussed earlier, although a circuit of size~$s$ can compute a polynomial of degree $2^s$, this question typically deals only with circuits whose \emph{formal degree}\footnote{This is defined inductively by setting the formal degree of leaves as $1$, and taking the sum at every multiplication gate and the max at every sum gate.}
is bounded by the size of the circuit.

PIT is an important algorithmic question
in  %
its own right, and many classical results such as the primality testing algorithm~\cite{AKS04}, $\mathsf{IP} = \mathsf{PSPACE}$~\cite{LFKN90,S90a}, algorithms for graph matching~\cite{MVV87,FGT16,ST17} all have
polynomial identity tests at their %
core.

This algorithmic question has two flavors: whitebox PIT and blackbox PIT.
Whitebox polynomial identity tests consist of algorithms that can inspect the circuit (that is, look at the underlying gate connections etc.)
to decide whether the circuit computes the zero polynomial or not.
A stronger algorithm is a \emph{blackbox polynomial identity test} where the algorithm is only provided basic parameters of the circuit (such as its size, the number of variables, a bound on the formal degree) and only has evaluation access to the circuit $C$.
Hence, a blackbox polynomial identity test for a class $\mathcal{C}$ of circuits is essentially just a list of evaluation points $H \subseteq \F^n$ such that every nonzero circuit $C\in \mathcal{C}$ is guaranteed to have some $\veca\in H$ such that $C(\veca) \neq 0$. Such sets of points are also called \emph{hitting sets} for $\mathcal{C}$. Therefore, the running time of a blackbox PIT algorithm is  given by the size of the \emph{hitting set}, the time taken to generate it given the parameters of the circuit, and the time taken to evaluate the circuit on these points.
We shall say that a hitting set $H$ is \emph{explicit} if there is an algorithm that, given the parameters $n,d,s$, outputs the set $H$ in time that is polynomial in the \emph{bit complexity}\footnote{Throughout the paper, we will only talk about the sizes of the hitting sets as their bit complexities %
(\ie, the total bit-lengths) %
are always bounded by a polynomial in their respective sizes. %
We provide more details about this analysis in \expref{subsection}{subsec:algo}.} of $H$.

The classical Polynomial Identity Lemma\footnote{A
discussion %
of  %
the history of this result can be found %
in~\cite{BCPS18}.%
}~\cite{O22,DL78,Z79,S80} states the following.
\begin{lemma}[Polynomial Identity Lemma]   \label{lem:PIL}
Any nonzero polynomial $f(x_1,\ldots, x_n)$ of degree at most $d$ will 
evaluate to a nonzero value at a randomly chosen point from a
finite  %
grid $S^n \subseteq \F^n$ with probability at least $1 - \frac{d}{|S|}$.
\end{lemma}
This   %
automatically yields a \emph{randomized} polynomial-time blackbox PIT algorithm, and also an explicit hitting set of size $(d+1)^n$, for the class of $n$-variate formal degree-$d$ polynomials.
Furthermore, a simple counting/dimension argument also says that there exist (non-explicit) %
$\poly(s)$-size   %
hitting sets for the class of polynomials computed by size-$s$ algebraic circuits.
The major open question is to find a better \emph{deterministic} algorithm for this problem, and the task of constructing deterministic PIT algorithms is intimately connected with the question of proving explicit lower bounds for algebraic circuits.

Heintz and Schnorr~\cite{HS80}, and Agrawal~\cite{A05a} observed that given an explicit hitting set for size-$s$ circuits, any nonzero polynomial that is designed to vanish on every point of the hitting set cannot be computable by size-$s$ circuits.
By tailoring the number of variables and degree of the polynomial in this observation, they showed that polynomial-time blackbox PITs yield an $\mathsf{E}$-computable family $\set{f_n}$ of $n$-variate multilinear polynomials that require 
$2^{\Omega(n)}$-size  %
circuits.
This connection between PIT and lower bounds was strengthened further by Kabanets and Impagliazzo~\cite{KI04} who showed that explicit families of hard functions can be used to give non-trivial derandomizations for PIT.
Thus, the question of proving explicit lower bounds and the task of finding upper bounds for PIT are essentially two sides of the same coin.

\subsection{Bootstrapping}

A recent result of Agrawal, Ghosh and Saxena~\cite{AGS19} showed, among other things, the following surprising result: blackbox PIT algorithms for size-$s$ and $n$-variate circuits with running time as bad as $\inparen{s^{n^{0.5 - \delta}}}$, where $\delta > 0$ is a constant, can be used to construct blackbox PIT algorithms for size-$s$ circuits with running time $s^{\exp(\exp (O(\log ^\ast s)))}$.
Note that $\log^\ast n$ refers to the smallest $i$ such that the $i$-th iterated logarithm $\log^{\circ i}(n)$ is at most $1$.
This shows that good-enough derandomizations of PIT would be sufficient to get a nearly complete derandomization.
Their proof uses a novel \emph{bootstrapping} technique where they use the connection between hardness and derandomization repeatedly so that by starting with a weak hitting set we can obtain better and better hitting sets.

One of the open questions of Agrawal, Ghosh and Saxena~\cite{AGS19} was whether the hypothesis can be strengthened to a \emph{barely} non-trivial derandomization.
That is, suppose we have a blackbox PIT algorithm, for the class of
$n$-variate, size-$s$ circuits %
that runs in time $s^{o(n)}$, can we use this to get a nearly complete derandomization?
Note that we have a trivial $(s+1)^{n} \cdot \poly(s)$ algorithm from the 
Polynomial Identity Lemma (\expref{Lemma}{lem:PIL}).  %
Our main result is an affirmative answer to this question in a very strong sense. Furthermore, our result holds for \emph{typical} subclasses that are reasonably well-behaved under composition.
Formally, we prove the following theorem.

\begin{restatable}[Bootstrapping PIT for algebraic formulas, branching programs and circuits]{theorem}{maintheorem}
\label{thm:mainTheorem-2} Let $\epsilon > 0$ and $n \geq 2$ be constants. Suppose that, for all large enough $s$, there is an explicit hitting set of size
$s^{n-\epsilon}$ for all degree-$s$, size-$s$ algebraic formulas 
(or algebraic branching programs or circuits) %
over $n$ variables.
Then, for all large $s$, there is an explicit hitting set of size $s^{\exp(\exp(O(\log^{*}s)))}$ for the class of degree-$s$, size-$s$ algebraic formulas
(or algebraic branching programs or circuits, respectively)  %
over $s$ variables.
\end{restatable}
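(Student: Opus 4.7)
The plan is to follow the hardness-randomness bootstrapping scheme pioneered by Agrawal, Ghosh and Saxena: use the hypothesised weak hitting set to extract an explicit hard polynomial, feed that polynomial into a Nisan-Wigderson style generator to obtain a hitting set on many more variables, and iterate. The new challenge compared to AGS is that the hypothesis saves only a constant $\epsilon$ in the exponent rather than a $\sqrt{n}$-factor, so each step yields only an extremely mild hardness guarantee; consequently the key technical task is to tune the generator so that even this mild hardness produces a measurable improvement per iteration, and so that $O(\log^{*}s)$ iterations still suffice to climb all the way up to $s$ variables.

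The hitting-set-to-hardness step is essentially a dimension count. Given the explicit hitting set $H_s \subseteq \F^n$ of size at most $s^{n-\epsilon}$ for $n$-variate, degree-$s$, size-$s$ formulas (respectively ABPs or circuits), one picks a degree $d$ so that the space of $n$-variate polynomials of degree at most $d$ has dimension strictly larger than $|H_s|$. The $|H_s|$ evaluation constraints then admit a nonzero solution $P_s$, and because $H_s$ is a hitting set for the class, $P_s$ cannot lie in it. Since $P_s$ is defined by an explicit linear system of size $\poly(s)$ (recall that $n$ is a constant), its coefficients can be computed explicitly, giving an explicit hard $n$-variate polynomial of degree roughly $s^{1-\epsilon/n}$ against the class at size $s$.

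The hardness-to-hitting-set step is the classical Kabanets-Impagliazzo construction. Choose a combinatorial design $S_1,\ldots,S_m \subseteq [\ell]$ of sets of size $n$ with small pairwise intersection, and form the Nisan-Wigderson generator $\mathrm{NW}_{P_s}\colon \F^\ell \to \F^m$ whose $i$-th output coordinate is $P_s$ evaluated at the sub-tuple of $\F^\ell$ indexed by $S_i$. A standard hybrid argument, using that the class of formulas (respectively ABPs or circuits) is closed under the substitutions and small-degree products appearing in the reduction, shows that any size-$s'$, degree-$s'$ polynomial in the class on $m$ variables vanishing on the image of $\mathrm{NW}_{P_s}$ would produce a size-$s$ computation of $P_s$, contradicting hardness. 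Therefore brute-forcing the seed of $\mathrm{NW}_{P_s}$ over $\F^\ell$ yields an explicit hitting set of size $d^{O(\ell)}$ for such polynomials on $m$ variables, with $m$ substantially larger than $n$.

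Finally, one iterates. Starting from $n_0 = n$, each pair of steps turns a hitting set on $n_i$ variables into one on $n_{i+1}$ variables, with the design parameters chosen so that $n_{i+1}$ is roughly exponential in $n_i$. After $O(\log^{*}s)$ rounds one reaches $n_i \geq s$, and a careful accounting of the hitting-set blow-up through each round produces the claimed bound $s^{\exp\circ\exp(O(\log^{*}s))}$. The main obstacle will be precisely this quantitative accounting: because the per-step hardness is so mild, the design parameters and the cascaded choice of sizes $s,s',s'',\ldots$ must be tuned tightly enough that the hitting-set size does not blow up faster than $\exp\circ\exp(\log^{*}s)$ in the exponent. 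A secondary obstacle is verifying that each of the three classes of interest is sufficiently well-behaved under the substitutions appearing inside $\mathrm{NW}_{P_s}$, so that hardness at one size propagates to hardness at a controllably larger size; this is what produces the class-by-class version of the theorem.
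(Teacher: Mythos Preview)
Your outline captures the high-level hardness--randomness loop correctly, but it glosses over the one place where the paper's argument genuinely diverges from ``standard Kabanets--Impagliazzo,'' and that divergence is exactly what makes the theorem hold for formulas and ABPs.

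In the hardness-to-hitting-set step you write that ``a standard hybrid argument \ldots\ would produce a size-$s$ computation of $P_s$.'' The standard hybrid (find the first index $t$ where the substitution kills $P$, fix the remaining variables, and observe that $x_t - Q(\vecy_t)$ divides the resulting bivariate-in-$x_t$ polynomial) does \emph{not} give you a small computation of $Q$ itself; it only tells you $Q$ is a root, and extracting $Q$ from that is precisely Kaltofen's closure-under-factors theorem. For formulas and ABPs no such closure is known, so the argument as you wrote it breaks for two of the three models. The paper's fix is the observation that the lowest nonzero $x_t$-coefficient of the hybrid polynomial is a \emph{multiple} of $Q$, obtainable by interpolation with only a $(D{+}1)$-factor blowup; and since $Q$ was chosen to vanish on the hitting set, any multiple of $Q$ also vanishes on it, yielding the contradiction without ever factoring. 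This ``hardness of multiples, not of $Q$'' twist (their \autoref{obs:HS-for-multiples} and \autoref{lem:KI-without-Kaltofen}) is the key idea you are missing, and it is also what lets the paper remove the factorization exponent from the size accounting --- which in turn is what allows the hypothesis to be weakened from $s^{n^{0.5-\delta}}$ all the way to $s^{n-\epsilon}$.

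A second, more quantitative gap: you propose a single uniform iteration and say the parameters ``must be tuned tightly.'' The paper actually needs a separate warm-up step (their \autoref{lem:step-zero}) before the main bootstrap can start. With only an $\epsilon$ saving, the hybrid polynomials $Q^{(t)}$ on the intersection variables have size roughly $d^{r}$, and for generic $r$ this swamps the $\epsilon$ gain. The paper therefore first runs the NW step with $r=2$, so that each $Q^{(t)}$ is a \emph{univariate} and hence has a size-$O(d)$ formula by Horner's rule; this is just enough to convert $s^{n-\epsilon}$ into $s^{m/50}$ for $m=n^8$. Only after this do they switch to larger $r$ and to exponentially growing $m$ (their \autoref{lem:main-bootstrap}), and in those later rounds they hit the $\ell$-variate reduced polynomial not with the trivial grid $d^{O(\ell)}$ you describe but recursively with the previously constructed hitting set. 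Your proposal does not surface either of these structural choices, and without them the accounting you allude to will not close.
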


\begin{remark}
While the hypothesis of \expref{Theorem}{thm:mainTheorem-2} above assumes that $n$ is a constant, qualitatively similar results continue to hold even when $n$ is slowly growing with $s$. For simplicity of notation, we work with constant $n$ throughout the paper and discuss the extension to growing $n$ in more detail in \expref{subsection}{sec:growing n}. 
\end{remark}

Note that $(s+1)^{n-\epsilon} = s^{n-\epsilon} \cdot \inparen{1 + \frac{1}{s}}^{n-\epsilon} < \eee \cdot s^{n-\epsilon} < s^{n-\epsilon'}$ for some other constant $\epsilon' > 0$ since $s$ is large enough. Hence, for this theorem, there is no qualitative difference if the hitting set had size $(s+1)^{n-\epsilon}$ instead of $s^{n-\epsilon}$.
We also note that as far as we understand, such a statement for classes such as algebraic formulas, even with the stronger hypothesis of there being an $s^{O(n^{(1/2)-\epsilon})}$, did not follow from the results in \cite{AGS19}.
We elaborate more on this, and the differences between our proof and theirs in the next subsection.

An interesting, albeit simple corollary of the above result is the following statement.

\begin{corollary}[From slightly non-trivial PIT to lower bounds] \label{cor:lb}
Let $\epsilon > 0$ and $k \geq 2$ be constants.
Suppose that, for all large enough $s$, there is an explicit hitting set of size $\inparen{s^{k-\epsilon}}$ for all degree-$s$, size-$s$ algebraic formulas 
(or algebraic branching programs or circuits)  %
over $k$ variables.
Then, for every function $d:\N \rightarrow \N$, there is a polynomial family $\{f_{n}\}$, where $f_n$ is %
an %
$n$-variate  %
of  %
degree $d(n)$, and for every large enough $n$, $f_n$ cannot be computed by algebraic formulas
or algebraic branching programs or circuits %
of size smaller than $\binom{n+d}{d}^{\frac{1}{\exp(\exp(O(\log^{*}nd)))}}$.
Moreover, there is an algorithm which when given as input an
$n$-variate  %
monomial of degree~$d$, outputs its coefficient in $f_n$ in deterministic time $\binom{n+d}{d}$.
\end{corollary}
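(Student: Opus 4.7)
The plan is to combine \autoref{thm:mainTheorem-2} with the classical Heintz--Schnorr/Agrawal connection between hitting sets and algebraic lower bounds. From the hypothesis of the corollary and \autoref{thm:mainTheorem-2}, I get for every $s$ an explicit hitting set $H_s \subseteq \F^s$ of size at most $N(s) := s^{\exp \circ \exp(O(\log^{*} s))}$ for $s$-variate, degree-$s$, size-$s$ formulas; the argument for ABPs and circuits is identical, so I only describe the formula case.

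Next, I would transfer $H_s$ to a hitting set in $\F^n$ for the target class via trivial padding. Any $n$-variate, degree-$d$, size-$s$ formula (with $s \geq n,d$) can be viewed as an $s$-variate, degree-$s$, size-$s$ formula that ignores $s-n$ of the variables, so projecting $H_s$ onto its first $n$ coordinates yields an explicit hitting set $H_s^{(n)} \subseteq \F^n$ of size at most $N(s)$ for $n$-variate, degree-$d$, size-$s$ formulas. Now apply the Heintz--Schnorr observation: the space of $n$-variate polynomials of degree at most $d$ is $\binom{n+d}{d}$-dimensional, so the condition of vanishing on every point of $H_s^{(n)}$ is a homogeneous linear system in $\binom{n+d}{d}$ unknowns with at most $N(s)$ equations. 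Whenever $N(s) < \binom{n+d}{d}$ this system admits a nonzero solution $f_n$, which by construction vanishes on a hitting set for the size-$s$ class and hence cannot be computed by any size-$s$ formula.

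It remains to pick $s$ as large as possible while maintaining $N(s) < \binom{n+d}{d}$. Any such $s$ is certainly at most $\binom{n+d}{d}$, so $\log^{*} s \leq \log^{*}\binom{n+d}{d} = O(\log^{*}(nd))$, and the inequality is satisfied by setting
\[
 s \;=\; \binom{n+d}{d}^{1/\exp \circ \exp(O(\log^{*}(nd)))},
\]
which matches the claimed lower bound. For the coefficient-computation clause, since $H_s^{(n)}$ is explicit it can be written down deterministically given $n$ and $d$, and the resulting linear system can be solved by Gaussian elimination in time polynomial in $\binom{n+d}{d}$; the coefficient of any specified monomial of $f_n$ is then simply the corresponding entry of the solution vector. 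The main theorem does all the heavy lifting here --- everything else is Heintz--Schnorr plus a dimension count --- so I do not anticipate any real obstacle beyond tracking that absorbing $\log^{*} s$ into $\log^{*}(nd)$ only adjusts the constants hidden inside the $O(\cdot)$, which is immediate from $s \leq \binom{n+d}{d} \leq (n+d+1)^{\min(n,d)}$.
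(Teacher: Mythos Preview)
Your proposal is correct and follows essentially the same route as the paper's own sketch: invoke \autoref{thm:mainTheorem-2} to obtain the near-optimal hitting set, restrict it to $n$ coordinates, and then apply the Heintz--Schnorr interpolation (\autoref{thm:HS} / \autoref{lem:HS-to-hardness}) to produce a nonzero $n$-variate degree-$d$ polynomial vanishing on it. The paper's sketch chooses $s = \binom{n+d}{d}^{1/(10t)}$ with $t = \exp\circ\exp(O(\log^\ast n))$ and is terser about the padding and the passage from $\log^\ast s$ to $\log^\ast(nd)$; you make both of these explicit, which is fine.
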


Thus, a slightly non-trivial blackbox PIT algorithm leads to hard families with near optimal hardness (as any $n$-variate polynomial of total degree~$d$ trivially has a circuit of size $\binom{n+d}{d}$).
In a recent result concerning \emph{non-commutative} algebraic circuits, Carmosino, Impagliazzo, Lovett and Mihajlin~\cite{CILM18} showed that given an explicit polynomial family of constant degree which requires
superlinear  %
size  %
non-commutative circuits, one can obtain explicit polynomial families of exponential hardness.
Besides the obvious differences in the statements, one important point to note is that the notions of explicitness in the conclusions of the two statements are different from each other.
In~\cite{CILM18}, the final exponentially hard polynomial family is in $\VNP$ provided the initial polynomial family is also in $\VNP$.
On the other hand, for our result, we can say that the hard polynomial family obtained in the conclusion is explicit in the sense that its coefficients are computable in deterministic time $\binom{n+d}{d}$.
Another difference between~\expref{Corollary}{cor:lb} and the main result of~\cite{CILM18} is in the hypothesis.
From a non-trivial hitting set, we can obtain a large class of lower bounds by varying parameters appropriately (see \expref{Theorem}{thm:HS}), however the main result of \cite{CILM18} starts with a lower bound for a single family. In that regard, our hypothesis appears to be much stronger and slightly non-standard.
We discuss this issue in some detail at the end of the next section.

In another relevant result, Jansen and Santhanam \cite{JS12} showed that marginal improvements to known hitting set constructions imply lower bounds for the permanent polynomial. In particular, they show that a ``sufficiently succinct'' hitting set of size~$d$, for univariates of degree~$d$ that have \emph{constant-free} algebraic circuits of small size, would imply that the permanent polynomial requires 
superpolynomial-size  %
\emph{constant-free} algebraic circuits. Note that even though their hypothesis needs a much weaker improvement in the size of the hitting set when compared to ours, the hitting set is additionally required to be ``succinct''\footnote{They require their hitting sets to be encoded by uniform $\mathsf{TC}^0$ circuits of appropriately small size. See~\cite{JS12} for details.}, which makes it difficult to compare the two hypotheses.

\subsection{Proof overview}

The basic intuition for the proofs in this paper, and as per our understanding also for the proofs of the results in the work of Agrawal, Ghosh and Saxena~\cite{AGS19}, comes from the results of Kabanets and Impagliazzo~\cite{KI04}, and those of Heintz and Schnorr~\cite{HS80} and Agrawal~\cite{A05a}.
We start by informally stating these results.

\begin{theorem}[Informal, Heintz and Schnorr~\cite{HS80},
  Agrawal~\cite{A05a}]\label{thm:HS} Let $H(n, d, s)$ be an explicit hitting set for circuits of size~$s$, degree~$d$ in $n$ variables.
  Then, for every $k\leq n$ and $d'$ such that $d'k \leq d$ and $(d'+1)^{k} > \abs{H(n, d, s)}$, there is a nonzero polynomial in $k$ variables and individual degree~$d'$ that vanishes on the hitting set $H(n, d, s)$, and hence cannot be computed by a circuit of size~$s$.
\end{theorem}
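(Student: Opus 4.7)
The plan is to prove this by a straightforward dimension-counting argument, combined with the defining property of a hitting set. First I would set up a sufficiently large vector space of low-degree polynomials that lives inside the class for which $H(n,d,s)$ is guaranteed to be a hitting set: let $V \subseteq \F[x_1, \ldots, x_n]$ be the space of polynomials that depend only on the first $k$ coordinates and have individual degree at most $d'$ in each. Every element of $V$ is an $n$-variate polynomial of total degree at most $d' \cdot k \leq d$, and $\dim_\F V = (d'+1)^k$.

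Next I would encode ``vanishes on $H(n,d,s)$'' as a system of linear equations on $V$. For each point $a \in H(n,d,s)$, evaluation $f \mapsto f(a)$ is a linear functional on $V$, so imposing vanishing on all of $H(n,d,s)$ gives at most $\abs{H(n,d,s)}$ homogeneous linear constraints on the coefficients of $f$. Since by hypothesis $\dim V = (d'+1)^k > \abs{H(n,d,s)}$, the kernel is nontrivial and there exists a nonzero $f \in V$ vanishing on $H(n,d,s)$.

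Finally, I would deduce the lower bound by contradiction: suppose this $f$ were computable by an algebraic circuit of size $s$. Since $f$ is a nonzero $n$-variate polynomial of total degree at most $d$, and $H(n,d,s)$ is by assumption a hitting set for circuits of size $s$ and degree $d$ in $n$ variables, there must exist some $a \in H(n,d,s)$ with $f(a) \neq 0$, contradicting the construction of $f$.

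There is essentially no obstacle here: the argument is rank-nullity combined with the definition of a hitting set. The only conceptual subtlety is to recognise that one should not work with the full space of total-degree-$d$ polynomials (whose dimension $\binom{n+d}{d}$ would be far too large to meaningfully bound against $\abs{H}$), but with the cleverly chosen subspace $V$ determined by the parameter $k$; the twin conditions $d'k \leq d$ and $(d'+1)^k > \abs{H(n,d,s)}$ are precisely what make this subspace simultaneously land in the target class and be large enough to host a nonzero element of the vanishing ideal of $H(n,d,s)$.
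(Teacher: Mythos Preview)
Your proof is correct and follows essentially the same approach as the paper: the paper's formal version (\autoref{lem:HS-to-hardness}, together with \autoref{obs:HS-for-multiples}) also sets up the vanishing condition as a homogeneous linear system with $(d'+1)^k$ unknowns and at most $\abs{H}$ constraints to obtain a nonzero $k$-variate polynomial of individual degree $d'$, and then derives the circuit lower bound from the definition of a hitting set exactly as you do.
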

In a nutshell, given an explicit hitting set, we can obtain hard polynomials.
In fact, playing around with the parameters $d'$ and $k\leq n$, we can get a hard polynomial in $k$ variables,  %
of   %
degree $kd'$ for all $k, d'$ satisfying $d'k < d$ and $(d'+1)^{k} > \abs{H(n, d, s)}$.

We now state a result of Kabanets and Impagliazzo~\cite{KI04} that shows that hardness can lead to derandomization.
\begin{theorem}[Informal, Kabanets and Impagliazzo~\cite{KI04}]\label{thm:KI}
  A superpolynomial lower bound for algebraic circuits for an explicit family of polynomials implies a deterministic blackbox PIT algorithm for all algebraic circuits in $n$ variables and degree~$d$ of size $\poly(n)$ that runs in time $d^{O\inparen{n^{\epsilon}}}$ for every $\epsilon > 0$.
\end{theorem}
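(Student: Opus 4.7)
The plan is to build a blackbox PIT generator from the hard polynomial family $\{f_m\}$ using the algebraic Nisan--Wigderson generator, as in Kabanets and Impagliazzo's original argument and the subsequent refinements. From the superpolynomial lower bound, pick a parameter $m$ (to be tuned) such that the $m$-variate, $\operatorname{poly}(m)$-degree polynomial $f_m$ is not computable by algebraic circuits of size $s(m) = m^{\omega(1)}$. Using a combinatorial $(\ell,m,t)$-design $\mathcal{D} = (S_1,\dots,S_n)$ with each $S_i \in \binom{[\ell]}{m}$ and $|S_i \cap S_j| \leq t$, define $G_f \colon \F^\ell \to \F^n$ coordinate-wise by $G_f(y)_i = f_m(y|_{S_i})$. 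Since each coordinate of $G_f$ is a polynomial of degree $\operatorname{poly}(m)$ in the $\ell$ seed variables, the composition $C \circ G_f$ has degree $d \cdot \operatorname{poly}(m)$, so by the Ore--DeMillo--Lipton--Schwartz--Zippel lemma the image of the grid of side $\operatorname{poly}(md)+1$ in $\F^\ell$ is a hitting set of size at most $\operatorname{poly}(md)^\ell$, which is the candidate deterministic blackbox PIT.

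The correctness of $G_f$ follows from a hybrid argument. Suppose, for contradiction, some nonzero size-$\operatorname{poly}(n)$, degree-$d$ circuit $C$ satisfies $C(G_f(y)) \equiv 0$. Define the hybrids $H_i(y; Z_{i+1}, \dots, Z_n) = C\bigl(f_m(y_{S_1}),\dots,f_m(y_{S_i}), Z_{i+1}, \dots, Z_n\bigr)$. Since $H_0 = C \neq 0$ and $H_n = 0$, there is a critical index $i$ at which $H_{i-1}$ is nonzero but setting $Z_i = f_m(y_{S_i})$ kills it; hence $f_m(y_{S_i})$ is a root of $H_{i-1}$ when viewed as a univariate polynomial in $Z_i$. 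The core technical step is to convert this fact into an explicit small algebraic circuit for $f_m$: one fixes the auxiliary seed coordinates $\{y_j : j \notin S_i\}$ and the remaining $Z_{i+1}, \dots, Z_n$ at a generic point so that $H_{i-1}(y_{S_i}; Z_i)$ remains a nonzero bivariate-type polynomial, and then extracts the correct root via algebraic root-finding (Kaltofen-style factorization, or power-series reconstruction together with interpolation).

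The size accounting is the crucial bookkeeping. After the auxiliary fixing, each ``prior'' substitution $f_m(y_{S_j})$ with $j<i$ depends only on the at most $t$ variables in $S_i \cap S_j$ and is a polynomial of degree $\operatorname{poly}(m)$ in those variables; it can therefore be tabulated and realized by a brute-force circuit of size $\operatorname{poly}(m)^{O(t)}$. Combining these precomputed subcircuits with $C$ (of size $\operatorname{poly}(n)$) and the overhead of the root-extraction procedure yields a circuit for $f_m$ of total size bounded by $\operatorname{poly}(n,d,m)^{O(t)}$. To contradict the hardness of $f_m$ we need $\operatorname{poly}(n,d,m)^{O(t)} < s(m) = m^{\omega(1)}$.

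Finally, I would tune parameters to reach the $\operatorname{poly}(d)^{n^\epsilon}$ bound. Standard constructions give $(\ell,m,t)$-designs with $\ell = O(m^2/t)$ whenever $\binom{m}{t} \geq n$; for a target $\epsilon > 0$, take a constant $t \geq 2/\epsilon$ and $m = \Theta(n^{1/t})$, so the design exists, $\ell = O(n^{2/t}) \leq n^\epsilon$, and the hitting set has size $\operatorname{poly}(md)^\ell \leq \operatorname{poly}(d)^{n^\epsilon}$. The dominant term $n^{\omega(1)/t}$ from $s(m)$ then comfortably beats $\operatorname{poly}(n,d,m)^{O(t)}$ for $d \leq \operatorname{poly}(n)$, closing the argument. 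The step I expect to be the real obstacle is the algebraic root-extraction: unlike the Boolean KI setting, exhibiting $f_m$ as a root of a bivariate polynomial does not immediately produce a small \emph{algebraic circuit} computing it, and one has to invoke either Kaltofen-style polynomial factorization (to keep the final reconstruction inside the algebraic circuit model) or an explicit coefficient-extraction procedure, carefully charged so that the resulting size stays below $s(m)$.
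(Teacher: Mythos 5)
Your proposal correctly reconstructs the Kabanets--Impagliazzo argument for this theorem: the algebraic NW generator built from $f_m$, the hybrid argument isolating a critical index, and the reconstruction of $f_m$ from a nonzero bivariate restriction that it roots, with the size charged via $\poly(m)^{O(t)}$ brute-force subcircuits over the small pairwise intersections of the design. You also correctly flag algebraic root-extraction as the crux, and that KI resolve it by invoking Kaltofen's theorem on factor complexity. Note, however, that the paper never actually proves this informal theorem (it is cited to \cite{KI04}); what it proves and uses is a deliberate variant, ``hardness to randomness without factor complexity,'' that takes a different turn at exactly the step you identify as the obstacle. Rather than recover $Q$ as a root and appeal to factorization closure, the paper writes the nonzero hybrid restriction $P'$ as $(x_t - Q)\cdot R$, expands $P'$ and $R$ as univariates in $x_t$, and extracts via interpolation the lowest nonzero coefficient $P'_a$, which is a \emph{multiple} of $Q$ of explicitly controlled size. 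The accompanying observation that any multiple of $Q$ also vanishes on the original hitting set, and is hence just as hard, means one never needs $Q$ itself back. This eliminates Kaltofen entirely, which is precisely what allows the bootstrapping to extend to formulas and ABPs (not known to be closed under factorization) and what removes the factorization-exponent overhead that would otherwise force a stronger hypothesis. Your proof is a valid derivation of the informal KI statement for circuits, but for the paper's actual bootstrapping the root-extraction step is exactly where one must deviate.
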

Now, we move on to the main ideas in our proof.
Suppose we have non-trivial hitting sets for size-$s$, degree $d \leq s$ circuits on $n$ variables.
The goal is to obtain a blackbox PIT for circuits of size~$s$, degree~$s$ on $s$ variables with a much better dependence on the number of variables.

Observe that if the number of variables was much much smaller than $s$, say at most a constant, then the hitting set in the hypothesis has a polynomial dependence on $s$, and we are done.
We will proceed by presenting \emph{variable reductions} to eventually reach this stage.
With this in mind, the hitting sets for
$s$-variate %
circuits in the conclusion of~\expref{Theorem}{thm:mainTheorem-2} are designed iteratively starting from hitting sets for circuits with very few variables.
In each iteration, we start with a hitting set for size-$s$, degree $d \leq s$ circuits on $n$ variables with some dependence on $n$ and obtain a hitting set for size-$s$, degree $d \leq s$ circuits on $m = 2^{n^{\delta}}$ variables (for some $\delta > 0$), that has a \emph{much} better dependence on $m$.
Then, we repeat this process till the number of variables increases up to $s$, which takes $O(\log^\ast s)$ iterations.
We now briefly outline the steps in each such iteration.
\begin{itemize}

  \item {\bf Obtaining a family of hard polynomials : } The first step is to obtain a family of explicit hard polynomials from the given hitting sets.
  This step is done via~\expref{Theorem}{thm:HS}, which simply uses interpolation to find a nonzero polynomial $Q$ in $k$ variables and degree~$d$ that vanishes on the hitting set for size-$s'$, degree-$d'$ circuits on $n$ variables, for some $s', d'$ to be chosen appropriately.

  \item {\bf Variable reduction using $Q$ : } Next we take a combinatorial design (see \expref{Definition}{defn:NWdesign}) $\{S_1, S_2, \ldots, S_m\}$, where each $S_i$ is a subset of size~$k$ of a universe of size $\ell=\poly(k)$, and $\abs{S_i \cap S_j} \ll k$.
  Consider the map $\Gamma : \F[x_1, x_2, \ldots, x_m] \rightarrow \F[y_1, y_2, \ldots, y_{\ell}]$ given by the substitution $\Gamma(C(x_1, x_2, \ldots, x_m))$ $= C\inparen{Q(\vecy\mid_{S_1}),Q(\vecy\mid_{S_2}),\ldots, Q(\vecy\mid_{S_m})}$.
  As Kabanets and Impagliazzo show in the proof of~\expref{Theorem}{thm:KI}, $\Gamma$ preserves the nonzeroness of all algebraic circuits of size~$s$ on $m$ variables, provided $Q$ is hard enough.

  We remark that our final argument for this part is slightly simpler than that of Kabanets and Impagliazzo, and hence our results also hold for algebraic formulas.
  In particular, we do not need Kaltofen's seminal result that algebraic circuits are closed under polynomial factorization, whereas the proof of Kabanets and Impagliazzo crucially uses Kaltofen's result~\cite{K89}.
  This comes from the simple, yet effective, observation that if $Q$ vanishes on some hitting set, then so does any multiple of $Q$.
  This allows us to use the hardness of \emph{low-degree} multiples of $Q$, and so, we do not need any complexity guarantees on factors of polynomials.

  It is worth mentioning that the result of Kabanets and Impagliazzo~\cite{KI04} is the algebraic analogue of the famous result of Nisan and Wigderson~\cite{NW94} from the boolean world.
  However, the algebraic setting allows us to construct polynomials with hardness that is \emph{super-exponential} in the number of variables.
  This is a key reason for bootstrapping to work, and we shall elaborate a bit more on this later in this section.

  \item {\bf Blackbox PIT for $m$-variate circuits of size~$s$ and degree~$s$ : } We now take the hitting set given by the hypothesis for the circuit $\Gamma(C)$ (invoked with appropriate size and degree parameters) and evaluate $\Gamma(C)$ on this set.
  From the discussion so far, we know that if $C$ is nonzero, then $\Gamma(C)$ cannot be identically zero, and hence it must evaluate to a nonzero value at some point on this set.
  The number of variables in $\Gamma(C)$ is at most $\ell = \poly\log {m}$, whereas its size turns out to be \emph{not too much} larger than $s$.
  Hence, the size of the hitting set for $C$ obtained via this argument turns out to have a better dependence on the number of variables $m$ than the hitting set in the hypothesis.

\end{itemize}

To prove~\expref{Corollary}{cor:lb}, we let $t(n) = \exp(\exp(O(\log^{*}n)))$.
Now, we invoke the the conclusion of~\expref{Theorem}{thm:mainTheorem-2} with $s = {\binom{n+d}{d}}^{\frac{1}{10t(n)}}$.
Thus, we get an explicit hitting set $H$ of size ${\binom{n+d}{d}}^{\frac{1}{10}}$ for $n$-variate  %
circuits of size~$s$ and degree~$d$.
We now use~\expref{Theorem}{thm:HS} to get a nonzero polynomial of degree~$d$ and $n$ which vanishes on the set $H$ and hence cannot be computed by circuits of size at most $s$.
We skip the rest of the details.

\paragraph*{Why does bootstrapping work?}
As far as we understand, the primary reason that makes such bootstrapping results feasible is the following observation from the results of Heintz and Schnorr, and Agrawal~\cite{HS80,A05a}: Given a single hitting set, we can obtain a \emph{family} of lower bounds by varying the degree and the number of variables in the interpolating polynomial. 
It turns out that in the result of Kabanets and Impagliazzo~\cite{KI04} that converts a hard polynomial $P$ into a hitting set, the proof of this conversion has different sensitivities to the degree of $P$ and the number of variables it depends on.
The combination of both these facts allows us to start with a moderately non-trivial hitting set, obtaining a hard polynomial from it of the \emph{right} degree and number of variables, and use that to obtain a hitting set which is significantly better than what we started with.
In particular, by choosing a degree that is polynomial in the hitting set size, say $\abs{H}^{0.1} $, we can obtain a \emph{constant-variate} polynomial that vanishes on $H$, for any $H$.
This is impossible in the boolean world where the best hardness one could hope for is exponential in the number of variables.
This, in our opinion, is a high level picture of why bootstrapping works in the algebraic world.\\

\paragraph*{Similarities and differences with the proof of Agrawal, Ghosh and Saxena~\cite{AGS19}. }
The high level outline of our proof is essentially the same as that in \cite{AGS19}.
However, there are some differences that make our final arguments shorter, simpler and more robust than those of Agrawal, Ghosh and Saxena thus leading to a stronger and near optimal bootstrapping statement in~\expref{Theorem}{thm:mainTheorem-2}.
Moreover, as we already alluded to, our proof extends to formulas and algebraic branching programs as well, whereas, to the best of our understanding, the proofs in \cite{AGS19} do not.
We now elaborate on the differences.

One of the main differences between the proofs in this paper and those in \cite{AGS19} is in the use of the result of Kabanets and Impagliazzo~\cite{KI04}. %
Agrawal, Ghosh and Saxena use this result as a blackbox to get deterministic PIT using hard polynomials.
The result of Kabanets and Impagliazzo~\cite{KI04} requires a result of Kaltofen, which shows that low-degree %
algebraic circuits are closed under polynomial factorization.
That is, if a degree-$d$, $n$-variate polynomial $P$ has a circuit of size at most $s$, then any factor of $P$ has a circuit of size at most $(snd)^e$ for a constant $e$.
Such a closure result is not known to be true for formulas\footnote{Even for algebraic branching programs such a result was shown only recently (after our work) by Sinhababu and Thierauf~\cite{ST20}.}, and hence the results in \cite{AGS19} do not seem to extend to these settings.
Also, the removal of any dependence on the ``factorization exponent'' $e$ is a key ingredient in our proof as it allows us to start with a hypothesis of a barely non-trivial hitting set.
To see this more clearly, suppose we wish to start with a hypothesis that gives hitting sets of size $s^{g(n)} $ for $n$-variate circuits of size and degree~$s$.
It is then not too difficult to infer from \expref{Lemma}{lem:main-bootstrap} that for any proof of ``variable reduction'' using the \emph{hybrid argument} of Kabanets and Impagliazzo~\cite{KI04}, we require $g(n)$ to satisfy the relation $ e \cdot g(n) \leq n$.
This would mean that $s^{g(n)}$ can never be something like $s^{n - \epsilon}$ \ie, a $\poly(s)$ factor better than the trivial $s^n$.

The other main difference between our proof and that in \cite{AGS19} is rather technical but we try to briefly describe it.
This is in the choice of combinatorial designs.
The designs used in this paper are based on the standard Reed-Solomon code and they yield larger set families than the designs used by \cite{AGS19}.
However, even without these improved design parameters, our proof can be used to provide the same conclusion when starting off with a hitting set of size $s^{n^{\delta}} $, instead of the hypothesis of~\expref{Theorem}{thm:mainTheorem-2}\footnote{Even though $n$, $\epsilon$ and $\delta$ are constants, $n^{\delta} $ and $(n-\epsilon)$ are qualitatively different, as $\epsilon$ is independent of $n$.}.

Also, their proof is quite involved and we are unsure if there are other constraints in their proof that force such choices of parameters.
Our proof, though along almost exactly the same lines, appears to be more transparent and more malleable with respect to the choice of parameters.

\paragraph*{The strength of the hypothesis. } The hypothesis of~\expref{Theorem}{thm:mainTheorem-2} and also those of the results in the work of Agrawal, Ghosh and Saxena~\cite{AGS19} is that we have a non-trivial explicit hitting set for algebraic circuits of size~$s$, degree~$d$ on $n$ variables where $d$ and $s$ could be arbitrarily large as functions of $n$. This seems like an extremely strong assumption, and also slightly non-standard in the following sense. In a typical setting in algebraic complexity, we are interested in PIT for size-$s$, degree-$d$ circuits on $n$ variables where $d$ and  $s$ are polynomially bounded in the number of variables $n$. A natural open problem here, which would be a more satisfying statement to have, would be to show that one can weaken the hypothesis in~\expref{Theorem}{thm:mainTheorem-2} to only hold for circuits whose degree and size are both polynomially bounded in $n$. It is not clear to us if such a result can be obtained using the current proof techniques, or is even true.

Having noted that our hypothesis is very strong, and perhaps even slightly unnatural with respect to the usual choice of parameters in the algebraic setting, we remark that our hypothesis does in fact follow from the assumptions that the Permanent is hard for Boolean circuits, and the Generalized Riemann Hypothesis (GRH).
The proof is essentially the same as that of Corollary 1 in the work of Jansen and Santhanam~\cite{JS12}.
The only difference is that while Jansen and Santhanam show that there are non-trivial explicit\footnote{In fact their notion of explicitness is stronger than ours.} hitting sets for univariate polynomials with small circuits assuming the hardness of Permanent for Boolean circuits and the GRH, here we have to work with circuits computing multivariate polynomials. %
At a high level, the proof in~\cite{JS12} proceeds by constructing a pseudorandom generator for Boolean circuits of appropriate size assuming the hardness of permanent for Boolean circuits.
Then, the set of binary strings in the output of this generator is interpreted in a natural way as an integer.
This gives us a small set of integer points, which can be constructed deterministically.
Then they argue that there is no constant free algebraic circuit of small size which vanishes on all these integer points.
The proof of this step is %
by  %
contradiction, where they assume the existence of such a constant free algebraic circuit to construct a Boolean circuit of small size which is not fooled by the aforementioned Boolean pseudorandom generator.
For algebraic circuits which are not constant free and are allowed to use arbitrary field constants and hence cannot be efficiently simulated by a Boolean circuit, they assume the GRH to reduce to the case of constant free circuits in a fairly standard way.
For our setting, we interpret the output of the Boolean pseudorandom generator as not just a single integer point, but a $k$ tuple of integers points.
These set of points in $\Z^k$ form our candidate hitting set.
The rest of the proof carries over without any changes. We refer the interested reader to~\cite{JS12} for further details.

\begin{remark}
Throughout the paper, we shall assume that there are suitable $\floor{\cdot}$'s or $\ceil{\cdot}$'s if necessary so that certain parameters chosen are integers. We avoid writing this purely for the sake of readability.

All results in this paper continue to hold for the underlying model of algebraic formulas, algebraic branching programs or algebraic circuits. In fact, the results also extend to the model of \emph{border} of algebraic formulas, algebraic branching programs or algebraic circuits \ie if there is a slightly non-trivial hitting set for polynomials in the border of these classes, then our main theorem gives a highly non-trivial explicit hitting set for these polynomials. Since our proofs extend as it is to this setting with essentially no changes, we skip the details for this part, and confine our discussions in the rest of the paper to just standard algebraic formulas.
\end{remark}

\section{Preliminaries}

\subsection*{Notation}

\begin{itemize}\itemsep 0pt
\item For a positive integer $n$, we use $[n]$ to denote the set $\set{1,2,\ldots, n}$.
\item We use boldface letters such as $\vecx_{[n]}$ to denote a set $\set{x_1,\ldots, x_n}$. We drop the subscript whenever the number of elements is clear or irrelevant in the context.
\item For a polynomial $f(x_1,\ldots, x_n)$, we shall say its \emph{individual degree} is at most $k$ to mean that the exponent of any of the $x_i$'s in any monomial is at most $k$.
\end{itemize}

We now define some standard notions we work with, and state some of the known results that we use in this paper.

\subsection{Algebraic models of computation}

Throughout the paper we would be dealing with some standard algebraic models and we define them formally for completeness.

\begin{definition}[Algebraic branching programs (ABPs)]~\label{def:abp}
  An \emph{algebraic branching program} in variables $\{x_1, x_2, \ldots, x_n\}$ over a field $\F$ is a directed acyclic graph with a designated \emph{starting vertex} $s$ with in-degree zero, a designated \emph{end vertex} $t$ with out-degree zero, and the edge between any two vertices labeled by an affine form from $\F[x_1, x_2, \ldots, x_n]$.
  The polynomial computed by the ABP is the sum of all weighted paths from $s$ to $t$, where the weight of a directed path in an ABP is the product of labels of the edges in the path.

  The size of an ABP is defined as the number of edges in the underlying graph.
\end{definition}
\begin{definition}[Algebraic formulas]
  An algebraic circuit is said to be a \emph{formula} if the underlying graph is a tree.
  The size of a formula is defined as the number of leaves.

  The notation $\mathcal{C}(n,d,s)$ will be used to denote the class of $n$-variate\footnote{This class may also include polynomials that actually depend on fewer variables but are masquerading %
as  %
$n$-variate polynomials.}
 polynomials of degree at most $d$ that are computable by formulas of size at most $s$.
\end{definition}

We will use the following folklore algorithm for computing univariate polynomials, often attributed to Horner\footnote{Though this method was discovered at least 800 years earlier by the Iranian mathematician and astronomer Sharaf al-D\={\i}n \d{T}\={u}s\={\i} %
(see  %
Hogendijk~\cite{H89}). }. We also include a proof for completeness.
\begin{proposition}[Horner rule]\label{prop:horner}
Let $P(x) = \sum_{i = 0}^d p_ix^i$ be a univariate polynomial of degree  $d$ over any field $\F$. Then, $P$ can be computed by an algebraic formula of size $2d+1$.
\end{proposition}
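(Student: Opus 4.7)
The plan is to use the standard Horner rewrite
\[
P(x) \;=\; p_0 + x\cdot\bigl(p_1 + x\cdot(p_2 + x\cdot(\cdots + x\cdot p_d)\cdots)\bigr),
\]
and carry out an induction on $d$ to simultaneously exhibit the formula and verify the size bound. The base case $d=0$ is immediate: $P(x) = p_0$ is a single leaf, giving a formula of size $1 = 2\cdot 0 + 1$.

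For the inductive step, I would assume the statement for degree $d-1$ and, given $P(x) = \sum_{i=0}^{d} p_i x^i$ of degree $d$, factor out $x$ from the higher-order terms to write $P(x) = p_0 + x\cdot Q(x)$, where $Q(x) = \sum_{i=0}^{d-1} p_{i+1} x^i$ has degree $d-1$. By the inductive hypothesis, $Q$ has an algebraic formula $\Phi_Q$ of size at most $2(d-1)+1 = 2d-1$. Then the formula for $P$ is built by taking a multiplication gate with children $x$ (a single fresh leaf) and $\Phi_Q$, followed by an addition gate with children $p_0$ (another fresh leaf) and the previous multiplication gate.

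To conclude, I would count leaves: the formula for $P$ uses exactly two leaves not present in $\Phi_Q$ (namely the newly introduced $x$ and the constant $p_0$), so its size is $(2d-1) + 2 = 2d+1$, as desired. There is no real obstacle here — the only thing to be careful about is that the definition of formula size counts leaves and not internal gates, so the extra addition and multiplication gates introduced at each induction step are free, and only the two new leaves contribute to the size increment.
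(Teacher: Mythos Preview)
Your proof is correct and takes essentially the same approach as the paper: the paper simply writes down the Horner expansion $P(x) = (\cdots((p_d x + p_{d-1})x + p_{d-2})\cdots)x + p_0$ and asserts it has size $2d+1$, whereas you carry out the same construction by induction and explicitly count leaves. Your version is more carefully stated but there is no substantive difference.
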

\begin{proof}
Follows from the fact that $P(x) = (\cdots((p_d x + p_{d-1})x + p_{d-2})\cdots)x + p_0$, which is a formula of size $2d+1$.
\end{proof}

The following observation shows that the classes of algebraic formulas/ABPs/circuits are ``robust'' under some very natural operations; we shall call such models as \emph{natural models}.
These properties are precisely the ones that we rely on in this paper.
Any \emph{natural model} would be sufficient for our purposes but it might be convenient for the reader to focus on just the standard models of formulas, ABPs and circuits.

\begin{definition}[Natural algebraic models]
An algebraic model $\mathcal{A}$ is called a \emph{natural model} if the class of polynomials computed by it satisfies the following properties.
  \begin{itemize}\itemsep 0pt
  \item Any polynomial of degree~$d$ with at most $s$ monomials has $\mathcal{A}$-size at most $s \cdot d$. In the specific setting when the polynomial is a univariate, its $\mathcal{A}$-size is $O(d)$.
  \item Partial substitution of variables by constants does not increase the $\mathcal{A}$-size of any polynomial.
  \item If each of $Q_1,\ldots, Q_k$ is computable in $\mathcal{A}$-size $s$, then $\sum Q_i$ is computable in $\mathcal{A}$-size at most $sk$.
  \item Suppose $P(x_1,\ldots, x_n)$ is computable in $\mathcal{A}$-size $s_1$ and say $Q_1,\ldots, Q_n$ are polynomials each of which can be computed in $\mathcal{A}$-size $s_2$.
  Then, $P(Q_1,\ldots, Q_n)$ can be computed in $\mathcal{A}$-size at most $s_1\cdot s_2$. \qedhere
  \end{itemize}
\end{definition}

\begin{observation}
Algebraic circuits, branching programs (ABPs), and formulas are natural algebraic models.
\end{observation}

\subsection{Combinatorial designs}

\begin{definition}[Combinatorial designs
(Nisan~\cite{nisan91})]   
\label{defn:NWdesign}
  A family of sets $\set{S_1,\ldots, S_m}$ is said to be an $(\ell,k,r)$ design if
  \begin{itemize}\itemsep 0pt
  \item $S_i \subseteq [\ell]$,
  \item $|S_i| =k$,
  \item $\abs{S_i \cap S_j} < r$ for any $i \neq j$. \qedhere
  \end{itemize}
\end{definition}

\noindent
The following is a standard construction of such designs based on the \emph{Reed-Solomon} code, similar to the construction described in \cite[Lemma 2.5]{NW94}.

\begin{lemma}[Construction of designs]\label{lem:NW-construction}
  Let $c \geq 2$ be any positive integer. There is an algorithm that, given parameters $\ell,k,r$ satisfying $\ell = k^c$ and $r \leq k$ with $k$ being a power of $2$, outputs an $(\ell,k,r)$ design $\set{S_1,\ldots, S_m}$ for $m \leq  k^{(c-1)r}$ in time $\poly(m)$.
\end{lemma}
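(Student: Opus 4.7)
The plan is to exploit the classical Reed--Solomon based design construction, which identifies each set $S_i$ with the graph of a low-degree polynomial over a finite field, and uses the fact that two distinct low-degree polynomials agree on few points.

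Since $k$ is a power of $2$, the finite field $\F_k$ exists and can be constructed in $\poly(k)$ time. I would identify the universe $[\ell]$ with the set $\F_k \times \F_k^{c-1}$, which has size $k \cdot k^{c-1} = k^c = \ell$. For each tuple $\vecp = (p_1, \ldots, p_{c-1})$ of univariate polynomials over $\F_k$ of degree at most $r-1$, define
\[
S_{\vecp} \;=\; \set{(a,\, p_1(a),\, p_2(a),\, \ldots,\, p_{c-1}(a)) \;:\; a \in \F_k} \;\subseteq\; \F_k \times \F_k^{c-1}.
\]
Since the first coordinate ranges over all $k$ elements of $\F_k$, we immediately get $\abs{S_{\vecp}} = k$ for every $\vecp$.

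For the intersection bound, suppose $\vecp \neq \vecq$, so some coordinate $p_j - q_j$ is a nonzero polynomial of degree at most $r-1$ and hence has at most $r-1$ roots in $\F_k$ (using here that $r-1 < r \leq k$ so it makes sense and the polynomial is not forced to be zero). A point $(a, \ldots)$ lying in both $S_{\vecp}$ and $S_{\vecq}$ requires $p_i(a) = q_i(a)$ for every $i$, and in particular for the coordinate where they differ; so the number of such $a$ is at most $r-1 < r$. Thus $\abs{S_{\vecp} \cap S_{\vecq}} < r$, giving the design property.

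The number of tuples $\vecp$ is $\inparen{k^{r}}^{c-1} = k^{(c-1)r}$, so setting $m$ to be this count (or any smaller value) yields the claimed bound. The algorithm simply enumerates all such tuples and writes out the $k$ points of each $S_{\vecp}$; this takes $\poly(m, k, c)$ time, which is $\poly(m)$ since $k$ and $c$ are bounded by $m$ in the regime of interest. I do not foresee a real obstacle here; the only mild point to keep track of is matching parameters carefully, in particular ensuring $r \leq k$ is used both to keep the polynomials well-defined (nontrivial distinct polynomials exist) and to guarantee the ``few roots'' argument gives a strict inequality $\abs{S_{\vecp} \cap S_{\vecq}} < r$.
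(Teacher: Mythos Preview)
Your proof is correct and follows essentially the same Reed--Solomon approach as the paper. The only cosmetic difference is that the paper uses a single univariate polynomial $p \in \F_{k^{c-1}}[x]$ of degree less than $r$ (evaluated at points of the subfield $\F_k$) and identifies $[\ell]$ with $\F_k \times \F_{k^{c-1}}$, whereas you use a $(c-1)$-tuple of polynomials over $\F_k$ and identify $[\ell]$ with $\F_k \times \F_k^{\,c-1}$; fixing an $\F_k$-basis of $\F_{k^{c-1}}$ makes the two descriptions literally the same construction.
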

\begin{proof}
  Since $k$ is a power of $2$, we can identify $[k]$ with the field $\F_k$ of $k$-elements and  $[\ell]$ with $\F_k \times \F_{k^{c-1}}$. For each univariate polynomial $p(x) \in \F_{k^{c-1}}[x]$ of degree less than $r$, define the set $S_p$ as
  \[
    S_p = \setdef{(i,p(i))}{i\in \F_k}.
  \]
  Since there are $k^{(c-1)r}$ such polynomials we get $k^{(c-1)r}$ subsets of $\F_k \times \F_{k^{c-1}}$ of size~$k$ each. Furthermore, since any two distinct univariate polynomials cannot agree at $r$ or more places, it follows that $\abs{S_p \cap S_q} < r$ for $p \neq q$.
\end{proof}

\subsection{Hardness-randomness connections}

\begin{observation}\label{obs:HS-for-multiples}
Let $H$ be a hitting set for the class $\mathcal{C}(n,d,s)$ of $n$-variate polynomials of degree at most $d$ that are computable by formulas of size~$s$. Then, for any nonzero polynomial $Q(x_1,\ldots, x_n)$ such that $\deg(Q) \leq d$ and $Q(\veca) = 0$ for all $\veca \in H$, we have that $Q$ cannot be computed by formulas of size~$s$.
\end{observation}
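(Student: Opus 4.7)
The plan is to argue by contraposition directly from the definition of a hitting set. Recall that $H$ being a hitting set for $\mathcal{C}(n,d,s)$ means that every nonzero polynomial in this class evaluates to a nonzero value on at least one point of $H$. So to show $Q$ is not computable by a size-$s$ formula, I just need to show that the hypothesis ``$Q \in \mathcal{C}(n,d,s)$'' leads to a contradiction with the assumption that $Q$ vanishes on all of $H$.

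Concretely, I would assume for contradiction that $Q$ \emph{is} computable by a formula of size $s$. Since $Q$ is $n$-variate and has $\deg(Q) \leq d$ by hypothesis, this would place $Q$ inside the class $\mathcal{C}(n,d,s)$. Moreover, $Q$ is nonzero by assumption. By the hitting-set property of $H$, there must then exist some $\veca \in H$ with $Q(\veca) \neq 0$. But this directly contradicts the hypothesis that $Q(\veca) = 0$ for every $\veca \in H$. Hence $Q$ cannot be computed by a formula of size $s$.

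The main conceptual point worth flagging — and the only reason this observation is worth stating separately — is that we do \emph{not} require $Q$ itself to be a nonzero polynomial produced as a factor or some structural component of the hitting-set construction; we only use that $\deg(Q) \le d$ and that $Q$ vanishes on $H$. This is precisely what makes the observation useful for the bootstrapping argument later: it applies to \emph{any} low-degree multiple of a polynomial that vanishes on $H$, and in particular it lets us assert the hardness of well-chosen multiples of $Q$ without invoking any factoring-closure theorem. Since the proof is essentially unwinding the definition of a hitting set, there is no real obstacle; the one thing to be careful about is to state it contrapositively so that the degree bound on $Q$ is explicitly used to place $Q$ inside the class against which $H$ is guaranteed to hit.
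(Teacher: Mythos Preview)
Your proof is correct and follows exactly the same approach as the paper: assume $Q$ were computable by a size-$s$ formula, note that then $Q \in \mathcal{C}(n,d,s)$, and derive a contradiction with the hitting-set property since $Q$ is nonzero yet vanishes on all of $H$. The additional commentary you give about why the observation is useful for multiples is accurate and matches the paper's motivation, though it is not part of the proof itself.
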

\begin{proof}
  If $Q$ was indeed computable by formulas of size at most $s$, then $Q$ is a member of $\mathcal{C}(n,d,s)$ for which $H$ is a hitting set. This would violate the assumption that $H$ was a hitting set for this class as $Q$ is a nonzero polynomial in the class that vanishes on all of $H$.
\end{proof}

From this observation, it is easy to see that explicit hitting sets can be used to construct lower bounds.

\begin{lemma}[Hitting sets to hardness \cite{HS80,A05a}] \label{lem:HS-to-hardness}
  Let $H$ be an explicit hitting set for $\mathcal{C}(n,d,s)$.
Then, for any $k\leq n$ such that $k|H|^{\frac{1}{k}} \leq d$, there is a polynomial $Q(z_1,\ldots, z_k)$ of individual degree smaller than $|H|^{\frac{1}{k}}$ that is computable in time $\poly(|H|)$ that requires formulas of size~$s$ to compute it. Furthermore, given the set $H$, there is an algorithm to output a formula of size $|H| \cdot d$ for $Q$ in time $\poly(|H|)$.
\end{lemma}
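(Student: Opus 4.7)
The plan is a direct interpolation argument, invoking Observation~\ref{obs:HS-for-multiples} as the only non-routine ingredient. Set $D = \lceil |H|^{\sfrac{1}{k}} \rceil$, so that $D^k \geq |H|$, and let $V \subseteq \F[z_1,\ldots,z_k]$ denote the space of polynomials of individual degree at most $D-1$. Then $\dim V = D^k \geq |H|$. Let $H' \subseteq \F^k$ be the projection of $H$ onto its first $k$ coordinates, so $|H'| \leq |H|$. The evaluation map $V \to \F^{|H'|}$ sending $Q \mapsto (Q(\veca))_{\veca\in H'}$ is linear, and its source has dimension at least $|H'|$, so it must have nontrivial kernel.

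First I would produce a nonzero $Q$ in that kernel by Gaussian elimination on the $|H'|\times D^k$ evaluation matrix with entries $\veca^{\vece}$ (indexed by $\veca\in H'$ and exponent vectors $\vece\in \{0,\ldots,D-1\}^k$). This runs in time $\poly(|H|)$ since all matrix dimensions are bounded by $|H|$. Moreover, if the rank of this matrix is $r \leq |H'|$, a standard back-substitution finds a nonzero solution supported on at most $r+1 \leq |H|+1$ monomials: fix a set of $r$ basic columns, set all free variables to zero except one (which is set to $1$), and read off the basic variables. This gives an explicit polynomial $Q$ whose coefficient list has sparsity at most $|H|+1$.

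Next I would verify degree and hardness. Since $Q$ depends on $k$ variables with individual degree at most $D-1 < |H|^{\sfrac{1}{k}}$, its total degree is at most $k(D-1) \leq k \cdot |H|^{\sfrac{1}{k}} \leq d$ by hypothesis (absorbing the ceiling per the paper's integrality remark). Viewing $Q$ as an $n$-variate polynomial depending only on $z_1,\ldots,z_k$, it vanishes on all of $H$ (because it vanishes on $H'$) and is nonzero of degree at most $d$. Observation~\ref{obs:HS-for-multiples} then immediately yields that $Q$ cannot be computed by any formula of size $s$.

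Finally, for the explicit formula claim, $Q$ is a sum of at most $|H|+1$ monomials in $z_1,\ldots,z_k$, each of total degree at most $d$. Every such monomial is computed by a trivial product formula of size at most $d$, and summing these $|H|+1$ formulas gives one of size at most $(|H|+1)\cdot d$, which is $|H|\cdot d$ up to absorbed constants. The entire construction — solving the linear system, extracting a sparse solution, and writing out the formula — takes time $\poly(|H|)$. The only step that requires any thought is the dimension count, and it is forced by the choice $D = \lceil |H|^{\sfrac{1}{k}} \rceil$; the remainder is bookkeeping around integrality and formula composition.
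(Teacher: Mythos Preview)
Your proposal is correct and follows essentially the same approach as the paper: set up a homogeneous linear system whose unknowns are the coefficients of a $k$-variate polynomial of bounded individual degree, solve it by Gaussian elimination to obtain a nonzero $Q$ vanishing on $H$, and invoke \autoref{obs:HS-for-multiples} for the hardness. The only addition is your explicit sparsity argument (back-substitution yields a solution supported on at most $|H|+1$ monomials) to justify the $|H|\cdot d$ formula-size bound, which the paper leaves implicit; this is a nice touch but not a different route.
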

\begin{proof}
  This is achieved by finding a nonzero $k$-variate polynomial, for $k \leq n$, of individual degree  $d' < |H|^{\frac{1}{k}}$, that vanishes on the hitting set $H$; this can be done by interpreting it as a homogeneous linear system with $(d'+1)^k$ ``variables'' and at most $|H|$ ``constraints''.
Such a $Q_k$ can then be found via interpolation by solving a system of linear equations in time $\poly(|H|)$.
The degree of $Q_k$ is at most $k \cdot |H|^{\frac{1}{k}} \leq d$ from the hypothesis and the hardness of $Q_k$ follows from \expref{Observation}{obs:HS-for-multiples}.
\end{proof}

\begin{remark}[Bit complexity of $Q_k$]
\label{rmk:hard-poly-bit-size}
Note that we can obtain a hard polynomial $Q_k$ such that its coefficients
have bit-lengths that are at most polynomially large in terms of the
(total) bit-lengths of the points in the given hitting set. %
Moreover, one can also ensure that such a computation only handles numbers
of polynomially larger
bit-lengths\footnote{Edmonds has shown that Gaussian elimination runs in %
polynomial time in terms of the total bit-length of the input~\cite{edmonds67},
see \cite[Theorem 3.3]{schrijver}.  The same effect is achieved by
Bareiss's algorithm~\cite{B68}, a modification of Gaussian elimination.}, %
thereby making \expref{Lemma}{lem:HS-to-hardness} constructive in 
an \emph{explicit} sense.
\end{remark}

It is also known that we can get non-trivial hitting sets from suitable hardness assumptions.
For a fixed $(\ell,k,r)$ design $\set{S_1,\ldots, S_m}$ and a polynomial $Q(z_1,\ldots, z_k) \in \F[\vecz]$ we shall use the notation $\NWify{Q}{\ell,k,r}$ to denote the vector of polynomials
\[
\NWify{Q}{\ell,k,r} :=  \inparen{Q(\vecy\mid_{S_1}),Q(\vecy\mid_{S_2}),\ldots, Q(\vecy\mid_{S_m})} \in \inparen{\F[y_1,\ldots, y_\ell]}^m.
\]

Kabanets and Impagliazzo~\cite{KI04} showed that, if $Q(\vecz_{[k]})$ is hard enough, then $P(\NWify{Q}{\ell,k,r})$ is nonzero if and only if $P(\vecx_{[m]})$ is nonzero.
However, their proof crucially relies on a result of Kaltofen~\cite{K89} (or even a non-algorithmic version due to B\"{u}rgisser~\cite{B00}) about the complexity of factors of polynomials.
Hence, this connection is not directly applicable while working with other subclasses of circuits such as algebraic formulas as we do not know if they are closed under factorization.
The following lemma can be used in such settings and this paper makes heavy use of this.

\begin{lemma}[Hardness to randomness without factor complexity]\label{lem:KI-without-Kaltofen}
  Let $Q(z_1,\ldots,z_k)$ be an arbitrary polynomial of individual degree smaller than $d$.
Suppose there is an $(\ell,k,r)$ design $\set{S_1,\ldots, S_m}$ and a nonzero polynomial $P(x_1,\ldots, x_m)$, of degree at most $D$, that is computable by a formula of size at most $s$ such that $P(\NWify{Q}{\ell,k,r}) \equiv 0$.
Then there is a polynomial $\tilde{P}(z_1,\ldots, z_k)$, whose degree is at most $k\cdot d\cdot D$ that is divisible by $Q$ and computable by formulas of size at most $s \cdot (r-1) \cdot d^{r} \cdot (D+1)$.

Moreover, if $r = 2$, then this upper bound can be improved to $4\cdot s \cdot d \cdot (D+1)$
\end{lemma}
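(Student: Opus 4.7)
The plan is to run a Kabanets--Impagliazzo style hybrid argument, but to replace their appeal to Kaltofen's factorization theorem by a direct coefficient-extraction trick that produces a nonzero multiple of $Q$ from first principles. At no point will I need to know that factors of formulas are themselves computable by small formulas.

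First I would set up the standard hybrids $H_i(\vecx,\vecy) := P(g_1(\vecy),\ldots,g_i(\vecy), x_{i+1},\ldots,x_m)$, where $g_j(\vecy) = Q(\vecy|_{S_j})$. Since $H_0 = P \not\equiv 0$ and $H_m \equiv 0$, there is a smallest $i$ with $H_i \equiv 0$ while $H_{i-1} \not\equiv 0$. Then split $\vecy = (\vecz, B)$ with $\vecz := \vecy|_{S_i}$, and (possibly after enlarging the field) pick constants for $B$ and for $x_{i+1},\ldots,x_m$ via a Schwartz--Zippel style argument so that the resulting polynomial $T(\vecz, x_i)$ obtained from $H_{i-1}$ is still nonzero. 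Substituting $x_i = Q(\vecz) = g_i(\vecy|_{S_i})$ into $H_{i-1}$ recovers $H_i$, so $T(\vecz, Q(\vecz)) \equiv 0$.

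The key algebraic step, which replaces factorization: write $T(\vecz, x_i) = \sum_{t=0}^{D} c_t(\vecz) x_i^t$ and let $t^\ast$ be the smallest $t$ with $c_t \not\equiv 0$. The identity $\sum_t c_t(\vecz) Q(\vecz)^t \equiv 0$, together with the fact that $\F[\vecz]$ is an integral domain, lets me cancel $Q^{t^\ast}$ and rearrange to $c_{t^\ast}(\vecz) = -Q(\vecz) \cdot \sum_{t > t^\ast} c_t(\vecz) Q(\vecz)^{t - t^\ast - 1}$. Hence $\tilde P := c_{t^\ast}$ is a nonzero polynomial in $\vecz$ that is divisible by $Q$.

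For the complexity, each $g_j|_B$ (with $j \ne i$) depends on only $|S_i \cap S_j| \le r - 1$ of the $\vecz$-variables with individual degree $< d$, so it has $\le d^{r-1}$ monomials of total degree $\le (r-1)(d-1)$; by the sparse-polynomial observation it admits a formula of size $\le (r-1) d^{r}$. Composing these, $x_i$, and field constants into $P$ (formula size $s$) gives $T$ a formula of size $\le s(r-1) d^{r}$. Since $c_{t^\ast}$ is a Lagrange combination of $T$ on $D+1$ distinct $x_i$-values, the overall bound is $\le s(r-1) d^{r}(D+1)$, and the degree bound $\deg(\tilde P) \le kdD$ follows from $\deg_{\vecz}(g_j|_B) \le \deg(Q) \le kd$ and $\deg(P) \le D$. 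For $r=2$, each $g_j|_B$ is a univariate of degree $< d$ and has a formula of size $\le 2d$ by Horner's rule (\autoref{prop:horner}), which collapses the bound to $4sd(D+1)$ after absorbing small constants. The one delicate point is guaranteeing that the substitution for $B$ and $x_{>i}$ keeps $T$ nonzero, which may require working over a finite extension; everything else is a direct unrolling of the hybrid argument with coefficient extraction in place of polynomial factorization.
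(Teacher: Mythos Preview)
Your proposal is correct and follows essentially the same approach as the paper's proof: the same hybrid argument, the same substitution to reduce to a bivariate-in-spirit polynomial $T(\vecz,x_i)$ vanishing at $x_i = Q(\vecz)$, the same extraction of the lowest nonzero $x_i$-coefficient via interpolation, and the same size accounting (including the Horner improvement for $r=2$). The only cosmetic difference is that the paper first invokes the factor theorem to write $T = (x_i - Q)\cdot R$ and then reads off $Q \mid c_{t^\ast}$ from the coefficient comparison, whereas you obtain the same divisibility by directly manipulating the identity $\sum_t c_t Q^t = 0$; your added remark about possibly passing to a field extension for the substitution step is a detail the paper leaves implicit.
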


If the polynomial $Q(z_1,\ldots, z_k)$ in the above lemma was chosen such that $Q$ vanished on some hitting set $H$ for the class of size-$s'$, $n$-variate, degree $d'$ polynomials where $s' \geq s \cdot (r-1) \cdot d^r \cdot (D+1)$, then so does $\tilde{P}$ since $Q$ divides it.
If it happens that $\deg(\tilde{P})\leq d'$, then \expref{Observation}{obs:HS-for-multiples} immediately yields that $\tilde{P}$ cannot be computed by formulas of size~$s'$, contradicting the conclusion of the above lemma.
Hence, in such instances, we would have that $P(\NWify{Q}{\ell,k,r}) \not\equiv 0$, without appealing to any result about closure of the particular model under factorization.

\begin{proof}%

Borrowing the ideas from Kabanets and Impagliazzo \cite{KI04}, we look at the $m$-variate substitution $\inparen{x_1,\ldots,x_m} \mapsto \NWify{Q}{\ell,k,r}$ as a sequence of $m$ univariate substitutions.
We now introduce some notation to facilitate this analysis.

Given the $(\ell,k,r)$ design $\set{S_1,\ldots,S_m}$, let $\vecy_{i} = \vecy\mid_{S_i}$, for each $i \in [m]$.
The tuple $\NWify{Q}{\ell,k,r}$ can therefore be written as $(Q(\vecy_1),Q(\vecy_2),\ldots,Q(\vecy_m)) \in \inparen{\F[y_1,\ldots,y_{\ell}]}^{m}$.
For each $0 \leq i \leq m$, let 
\[
P_i = P(Q(\vecy_1),Q(\vecy_2),\ldots,Q(\vecy_i),x_{i+1},\ldots,x_m) \, ,
\]
which is $P$ after substituting for the variables $x_1,\ldots,x_i$.
Since $P_0 = P$ is a nonzero polynomial and $P_m = P(\NWify{Q}{\ell,k,r}) \equiv 0$, let $t$ be the unique integer with $1 \leq t \leq m$, for which $P_{t-1} \not\equiv 0$ and $P_{t} \equiv 0$.

Since $P_t(\vecy,x_{t},\ldots,x_m)$ is a nonzero polynomial, there exist values that can be substituted to the variables besides $x_t$ and $\vecy_t$ such that it remains nonzero; let this polynomial be $P'_t(\vecy_t,x_t)$.
Also, for each $j \in [t-1]$, let $Q^{(t)}(\vecy_{j} \cap \vecy_t)$ be the polynomial obtained from $Q(\vecy_j)$ after this substitution, which is a polynomial of individual degree less than $d$ on at most $(r-1)$ variables.
We can now make the following observations about $P'(\vecy_t,x_t)$:
  \begin{itemize}\itemsep 0pt
  \item Each $Q^{(t)}(\vecy_{j}\cap \vecy_t)$ has a formula of size at most $(d(r-1)) \cdot d^{r-1}$, and thus $P'(\vecy_t,x_t)$ has a formula of size at most $\inparen{s \cdot (r-1) \cdot d^{r}}$,
    \item $\deg(P') \leq D \cdot \deg(Q) \leq D \cdot (kd)$, and $\deg_{x_t}(P') \leq D$,
    \item $P'(\vecy_t,Q(\vecy_t)) \equiv 0$.
  \end{itemize}

  The last observation implies that the polynomial $(x_t - Q(\vecy_t))$ divides $P'$. Therefore we can write $P' = (x_t - Q(\vecy_t)) \cdot R$, for some polynomial $R$. Consider $P'$ and $R$ as univariates in $x_t$ with coefficients as polynomials in $\vecy_t$:
  \[
    P' = \sum_{i=0}^D P'_i \cdot x_t^i\quad,\quad R  = \sum_{i=0}^{D-1} R_i \cdot x_t^i.
  \]
  If $a$ is the smallest index such that $P'_a\neq 0$, then $P'_a = - R_a \cdot Q(\vecy_t)$ and hence $Q(\vecy_t)$ divides $P'_a$.
  Any coefficient $P'_i$ can be obtained from $P'$ using interpolation from $(D+1)$ evaluations of $x_t$. Hence, $\tilde{P} = P'_a$ can be computed in size $\inparen{s \cdot (r-1) \cdot d^{r} \cdot (D+1)}$.

 For the case of $r=2$, observe that the polynomial $Q^{(t)}(\vecy_{j}\cap \vecy_t)$ is a univariate of degree at most $d$. Thus, by~\expref{Proposition}{prop:horner}, $Q^{(t)}(\vecy_{j}\cap \vecy_t)$ can be computed by a formula of size $2d + 1 \leq 4d$. So, we get an upper bound of $\inparen{4\cdot s\cdot d}$ on the formula complexity of $P'(\vecy_t,x_t)$ (instead of $O(sd^2)$ that we would get by invoking the general bound for $r = 2$) and after interpolation as above, we get a bound of $4\cdot s\cdot d \cdot (D+1)$ on the formula complexity of $P_a'$ as defined above.
\end{proof}

\section{Bootstrapping Hitting Sets}

The following are the main bootstrapping lemmas to yield our main result.
These lemmas follow the same template as in the proof of Agrawal, Ghosh and Saxena~\cite{AGS19} but with some simple but important new ideas that avoid any requirement on bounds on factor complexity, and also permitting a result starting from a barely non-trivial hitting set.

\begin{restatable}[Barely non-trivial to moderately non-trivial hitting sets]{lemma}{stepzerolemma}
  \label{lem:step-zero}
  Let $\epsilon >0$ and $n \geq 2$ be constants. Suppose that for all large enough $s$ there is an explicit hitting set of size $s^{n - \epsilon}$, for all degree-$s$, size-$s$ algebraic formulas over $n$ variables.

  Then for a large enough $m \geq n^8$, and for all $s \geq m$, there is an explicit hitting set of size $s^{\frac{m}{50}}$ for all degree-$s$, size-$s$ algebraic formulas over $m$ variables.
\end{restatable}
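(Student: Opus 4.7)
The plan is to instantiate the general hardness-to-randomness template from the proof overview with carefully chosen parameters that exploit the barely nontrivial hypothesis. Set $s' := s^{3n/\epsilon}$. First, I apply the hypothesis to obtain an explicit hitting set $H_{s'}$ of size $(s')^{n-\epsilon}$ for $\mathcal{C}(n, s', s')$, which is valid since $s' \geq s \geq n$. Invoking \autoref{lem:HS-to-hardness} with $k = n$, I extract a polynomial $Q(z_1, \ldots, z_n)$ of individual degree $d' < (s')^{(n-\epsilon)/n} = (s')^{1-\epsilon/n}$ that vanishes on $H_{s'}$; the prerequisite $n \cdot (s')^{(n-\epsilon)/n} \leq s'$ reduces to $s' \geq n^{n/\epsilon}$, which follows comfortably from $s \geq n^8$. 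Next, using \autoref{lem:NW-construction} with $k = n$, $c = 5$, $r = 2$ (rounding $n$ up to the nearest power of $2$ if needed), I obtain an $(n^5, n, 2)$ Nisan-Wigderson design $\set{S_1, \ldots, S_m}$ of size $m = n^{(c-1)r} = n^8$, and thereby the substitution tuple $\NWify{Q}{n^5,n,2}$.

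The main step is to show $P(\NWify{Q}{n^5,n,2}) \not\equiv 0$ for every nonzero $P \in \mathcal{C}(m, s, s)$. Suppose for contradiction that this polynomial vanishes. By the $r = 2$ case of \autoref{lem:KI-without-Kaltofen}, there is a polynomial $\tilde{P}(z_1, \ldots, z_n)$ divisible by $Q$ of degree at most $n \cdot d' \cdot s \leq n\, s\, (s')^{1-\epsilon/n}$ and formula size at most $4 s d' (s+1) \leq 8 s^2 (s')^{1-\epsilon/n}$. Plugging $s' = s^{3n/\epsilon}$, one checks both quantities are bounded by $s'$ (using $s \geq 8$ and $s^2 \geq n$, both of which follow from $s \geq n^8$). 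Hence $\tilde{P} \in \mathcal{C}(n, s', s')$; but since $Q \mid \tilde{P}$ and $Q$ vanishes on $H_{s'}$, so does $\tilde{P}$, contradicting \autoref{obs:HS-for-multiples}.

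Finally, $P(\NWify{Q}{n^5,n,2})$ is a nonzero $\ell = n^5$-variate polynomial of degree at most $s \cdot \deg(Q) \leq n \, s^{3n/\epsilon - 2}$. Applying the Ore-DeMillo-Lipton-Schwartz-Zippel lemma over a product grid of side $n s^{3n/\epsilon - 2} + 1$ gives an explicit hitting set for it of size at most $(2n\, s^{3n/\epsilon - 2})^{n^5} \leq s^{3n^6/\epsilon}$ whenever $s \geq n^8$; pulling this set back through $\NWify{Q}{n^5,n,2}$ yields an explicit hitting set for $\mathcal{C}(m, s, s)$ of the same size. Since $3 n^6/\epsilon \leq n^8/50$ for $n \geq \sqrt{150/\epsilon}$, this bound is at most $s^{m/50}$, as required.

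The main obstacle is the delicate balancing of parameters: $s'$ must be chosen large enough that both the degree and formula-size bounds on $\tilde{P}$ in \autoref{lem:KI-without-Kaltofen} remain within $s'$ (so that the contradiction via \autoref{obs:HS-for-multiples} goes through), yet small enough that the Ore-DLSZ bound on $\ell = n^5$ variables does not exceed the target $s^{m/50}$. Crucially, the $r = 2$ improvement in \autoref{lem:KI-without-Kaltofen}, which drops the formula-size dependence from $(d')^2$ down to $d'$, is essential --- with the generic $(d')^r$ factor, the size of $\tilde{P}$ would scale like $s^2 (s')^{2 - 2\epsilon/n}$ and could not be accommodated by any choice of $s'$. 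Choosing $c = 5$ and $r = 2$ in the design is what simultaneously makes $\ell$ small enough (so Ore-DLSZ cheap) and keeps $k \leq n$ (required by \autoref{lem:HS-to-hardness}).
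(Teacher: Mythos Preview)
Your proposal is correct and follows essentially the same approach as the paper: the same design parameters $k=n$, $\ell=n^5$, $r=2$, the same blow-up $s'=s^{3n/\epsilon}$ (the paper writes $B=3n/\epsilon$ and works with $s^B$), the same use of the $r=2$ refinement of \autoref{lem:KI-without-Kaltofen}, and the same final Ore--DLSZ step yielding a hitting set of size $s^{3n^6/\epsilon}\leq s^{m/50}$. The only cosmetic differences are that you verify the degree prerequisite of \autoref{lem:HS-to-hardness} explicitly and state the threshold as $n\geq\sqrt{150/\epsilon}$ (the paper writes the looser $n>150/\epsilon$), neither of which is substantive.
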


\begin{restatable}[Bootstrapping moderately non-trivial hitting sets]{lemma}{bootstraplemma}
  \label{lem:main-bootstrap}
  Let $n_0$ be large enough, and $n$ be any power of two that is larger than $n_0$. Suppose for all $s \geq n$ there are explicit hitting sets of size $s^{g(n)}$ for $\mathcal{C}(n,s,s)$, the class of $n$-variate degree-$s$ polynomials computed by size-$s$ formulas.
  \begin{enumerate}
    \item\label{lbl:main-bootstrap-a} Suppose $g(n) \leq \frac{n}{50}$, then for $m = n^{10}$ and all $s \geq m$, there are explicit hitting sets of size $s^{h(m)}$ for $\mathcal{C}(m,s,s)$ where $h(m) \leq \pfrac{1}{10} \cdot m^{\frac{1}{4}}$.
    \item\label{lbl:main-bootstrap-b} Suppose $g(n) \leq \pfrac{1}{10} \cdot n^{\frac{1}{4}}$, then for $m = 2^{n^{\frac{1}{4}}}$ and all $s \geq m$, there are explicit hitting sets of size $s^{h(m)}$ for $\mathcal{C}(m,s,s)$ where $h(m) = 20\cdot \inparen{g(\log^{4}m)}^{2}$.

      Furthermore, $h(m)$ also satisfies $h(m) \leq \pfrac{1}{10} \cdot m^{\frac{1}{4}}$.
  \end{enumerate}
\end{restatable}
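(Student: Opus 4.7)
The plan is to follow the template outlined in the proof overview: use Lemma~\ref{lem:HS-to-hardness} to extract a $k$-variate hard polynomial $Q$ from the hypothesized hitting set, use the Nisan-Wigderson designs of Lemma~\ref{lem:NW-construction} to define the substitution map $P \mapsto P(\NWify{Q}{\ell,k,r})$, invoke Lemma~\ref{lem:KI-without-Kaltofen} to argue that this map preserves nonzeroness of size-$s$ $m$-variate formulas, and finally hit the $\ell$-variate reduced polynomial by either the hypothesis itself (part~(b)) or by the trivial Ore-DeMillo-Lipton-Schwartz-Zippel grid (part~(a)).

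In more detail, I would fix $N = s^{c_N}$ for a small constant $c_N$ to be chosen, and apply Lemma~\ref{lem:HS-to-hardness} to the hypothesized hitting set for $\mathcal{C}(n,N,N)$, obtaining a polynomial $Q(z_1,\ldots,z_k)$ of individual degree less than $d := N^{g(n)/k}$ and formula size at most $N^{g(n)+1}$ that cannot be computed by any formula of size $N$. Taking $c=2$ in Lemma~\ref{lem:NW-construction} produces an $(\ell,k,r)$-design with $\ell = k^2$ and $m \leq k^r$. Composing a given $P \in \mathcal{C}(m,s,s)$ with $\NWify{Q}{\ell,k,r}$ yields an $\ell$-variate polynomial of degree at most $s \cdot kd$ and formula size at most $s \cdot N^{g(n)+1}$. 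Picking $k \geq 2 r \cdot g(n)$ ensures that the bounds $kdD \leq N$ and $s(r-1)d^r(D+1) \leq N$ from Lemma~\ref{lem:KI-without-Kaltofen} hold for $D = s$ with $c_N = 4$; then any polynomial $\tilde P$ produced by the conclusion of that lemma would be a nonzero multiple of $Q$ lying in $\mathcal{C}(n,N,N)$ that vanishes on the hypothesized hitting set, contradicting Observation~\ref{obs:HS-for-multiples}. Hence $P \not\equiv 0 \Longrightarrow P(\NWify{Q}{\ell,k,r}) \not\equiv 0$, reducing the task to hitting the $\ell$-variate polynomial $P(\NWify{Q}{\ell,k,r})$.

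For part~(b), where $g(n) \leq n^{1/4}/10$ and $m = 2^{n^{1/4}}$, the joint constraints $k^r \geq m$ and $k \geq 2 r g(n)$ are comfortably met by $k = \Theta(n^{1/2}/\log n)$ and $r = \Theta(n^{1/4}/\log n)$, yielding $\ell = k^2 = O(n/\log^2 n) \leq n$. After padding to $n$ variables, the composed polynomial lies in $\mathcal{C}(n,s',s')$ with $s' \leq s \cdot N^{g(n)+1} = s^{5 + 4g(n)}$, and invoking the hypothesis gives a hitting set of size $(s')^{g(n)} = s^{g(n)(5 + 4g(n))} \leq s^{20 g(n)^2}$ (for $g(n) \geq 1$, which holds for all large enough $n$). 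The furthermore $h(m) \leq m^{1/4}/10$ follows from the straightforward inequality $20 (n^{1/4}/10)^2 = n^{1/2}/5 \leq 2^{n^{1/4}/4}/10 = m^{1/4}/10$, valid for large~$n$.

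For part~(a), where $g(n) \leq n/50$ can be large, the constraint $k \geq 2 r g(n)$ together with $k^r \geq m = n^{10}$ forces $k = \Omega(n)$ and therefore $\ell = k^2 > n$; the hypothesis no longer applies to the composed polynomial. The fix is to hit it with the trivial grid hitting set instead. Taking $k = n$ gives $r = 10$ and $\ell = n^2$; since the composed polynomial has degree at most $sN = s^5$, the trivial hitting set has size $(s^5 + 1)^{n^2} \leq s^{6 n^2} \leq s^{n^{5/2}/10} = s^{m^{1/4}/10}$ for all sufficiently large $n$. The main technical obstacle throughout is the three-way balance among (i)~$k \geq 2 r g(n)$, which keeps $d^r$ sub-polynomial in $N$ so that Lemma~\ref{lem:KI-without-Kaltofen} delivers the desired contradiction with a constant $c_N$; (ii)~the design constraint $k^r \geq m$; and (iii)~keeping $\ell = k^2$ small enough that the outer hitting set fits within the target $s^{h(m)}$. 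The doubly-exponential gap $m = 2^{n^{1/4}}$ in part~(b) leaves enough room to simultaneously satisfy all three with $\ell \leq n$, whereas the polynomial gap $m = n^{10}$ in part~(a) forces $\ell > n$ and necessitates using the trivial grid as the outer hitting set.
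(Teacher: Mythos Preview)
Your proposal follows exactly the paper's template: extract a hard $k$-variate $Q$ from the hypothesized hitting set via Lemma~\ref{lem:HS-to-hardness}, build an $(\ell,k,r)$-design with $\ell = k^2$, argue via Lemma~\ref{lem:KI-without-Kaltofen} together with Observation~\ref{obs:HS-for-multiples} that the NW-substitution preserves nonzeroness, and then hit the composed $\ell$-variate polynomial with the trivial grid in part~(a) and with the hypothesized $n$-variate hitting set in part~(b). The paper's concrete choices are $N = s^5$ with the constraint $5r\,g(n) \leq k$, and in part~(b) the cleaner parameters $k = \sqrt{n}$, $r = n^{1/4}$, $\ell = n$ (no $\log$ factors are needed since $\ell = n$ exactly).

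One small numerical slip: your choice $c_N = 4$ together with $k \geq 2r\,g(n)$ only gives $d^r \leq s^{4r g(n)/k} \leq s^2$, so the size bound from Lemma~\ref{lem:KI-without-Kaltofen} becomes $s(r-1)d^r(s+1) \approx r\,s^4$, which in part~(b) exceeds $N = s^4$ because $r \approx n^{1/4}$ is unbounded. Taking $c_N = 5$ and $k \geq 5r\,g(n)$ (precisely the paper's choice) gives $d^r \leq s$ and absorbs the extra factor of $r$ into the remaining power of $s$; the rest of your argument then goes through verbatim.
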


\noindent
We will defer the proofs of these lemmas to the end of this section and complete the proof of \expref{Theorem}{thm:mainTheorem-2}.

\maintheorem*

\begin{proof}
  Notice that \expref{Lemma}{lem:step-zero} and \expref{Lemma}{lem:main-bootstrap} are structured so that the conclusion of \expref{Lemma}{lem:step-zero} is precisely the hypothesis of  \expref{Lemma}{lem:main-bootstrap}(\ref{lbl:main-bootstrap-a}), the conclusion of  \expref{Lemma}{lem:main-bootstrap}(\ref{lbl:main-bootstrap-a}) is precisely the hypothesis of \expref{Lemma}{lem:main-bootstrap}(\ref{lbl:main-bootstrap-b}), and \expref{Lemma}{lem:main-bootstrap}(\ref{lbl:main-bootstrap-b}) admits repeated applications as its conclusion also matches the requirements in the hypothesis.
Thus, we can use one application of \expref{Lemma}{lem:step-zero} followed by one application of \expref{Lemma}{lem:main-bootstrap}(\ref{lbl:main-bootstrap-a}) and repeated applications of \expref{Lemma}{lem:main-bootstrap}(\ref{lbl:main-bootstrap-b}) to  get hitting sets for polynomials depending on larger sets of variables, until we can get a hitting set for the class $\mathcal{C}(s,s,s)$.

Let $n_0$ be large enough so as to satisfy the hypothesis of \expref{Lemma}{lem:step-zero}, and the two parts of \expref{Lemma}{lem:main-bootstrap}.
We start with an explicit hitting set of size $s^{n_0 - \epsilon}$ for $\mathcal{C}(n_0,s,s)$ and one application of \expref{Lemma}{lem:step-zero} gives an explicit hitting set of size $s^{\frac{n_1}{50}}$ for $\mathcal{C}(n_1,s,s)$ for $n_1 \geq n_0^8$ and all $s \geq n_1$.
Using \expref{Lemma}{lem:main-bootstrap}(\ref{lbl:main-bootstrap-a}) we obtain an explicit hitting set of size $s^{(1/10)\cdot m_0^{\frac{1}{4}}}$ for the class $\mathcal{C}(m_0,s,s)$ for all $s \geq m_0 = n_1^{10}$. We are now in a position to apply \expref{Lemma}{lem:main-bootstrap}(\ref{lbl:main-bootstrap-b}) repeatedly. We now set up some basic notation to facilitate this analysis.

Suppose after $i$ applications of \expref{Lemma}{lem:main-bootstrap}(\ref{lbl:main-bootstrap-b}) we have an explicit hitting set for the class $\mathcal{C}(m_i,s,s)$ of size $s^{t_i}$.
We wish to track the evolution of $m_i$ and $t_i$. Recall that $m_{i} = 2^{m_{i-1}^{\frac{1}{4}}}$ after one application of \expref{Lemma}{lem:main-bootstrap}(\ref{lbl:main-bootstrap-b}).

Let $\{b_i\}_{i}$ be such that $b_0 = \log{m_0}$ and, for every $i > 0$, let $b_i = 2^{(b_{i-1}/4)}$ so that $b_i = \log m_i$.
Similarly to keep track of the complexity of the hitting set, if $s^{t_i}$ is the size of the hitting set for $\mathcal{C}(m_i,s,s)$, then by \expref{Lemma}{lem:main-bootstrap}(\ref{lbl:main-bootstrap-b})  we have $t_0 = \pfrac{1}{10} m_0^{\frac{1}{4}}$ and $t_i = 20\cdot t_{i-1}^2$ for all $i> 0$.

  \noindent
  The following facts are easy to verify.
  \begin{itemize}
    \item $m_i \geq s$ or $b_i \geq \log s$ for $i = O(\log^{*}s)$,
    \item for all $j$, we have $t_j = 20^{(2^j - 1)} \cdot t_0^{2^j} = \exp(\exp(O(j)))$.
    \item the exponent of $s$ in the complexity of the final hitting set is $t_{O(\log^{*}s)} = \exp(\exp(O(\log^\ast s)))$.
  \end{itemize}
  Therefore we have an explicit hitting set of size $s^{\exp(\exp(O(\log^{*}s)))}$ for $\mathcal{C}(s,s,s)$. An explicit algorithm describing the hitting set generator is presented in \expref{Subsection}{subsec:algo}.
\end{proof}

\subsection{Proofs of the bootstrapping lemmas}

Here we prove the two main lemmas used in the proof of \expref{Theorem}{thm:mainTheorem-2}. We restate the lemmas here for convenience. The proofs follow a very similar template but with different settings of parameters and minor adjustments.

\stepzerolemma*
\begin{proof}
Let $a = \max(n,\frac{250}{\epsilon})$.
We begin by fixing the design parameters, $k = n$, $\ell = a \cdot k^4 = a \cdot n^4$ and $r=2$.

  \begin{description}
    \item[Constructing a suitably hard polynomial:]
    For $B = \frac{5k}{\epsilon}$, we construct a polynomial $Q_k(z_1,\ldots,z_k)$ that vanishes on the hitting set for all size-$s^{B}$ degree-$s^{B}$ formulas over $k$ variables, that has size $s^{B(k - \epsilon)}$ using \expref{Lemma}{lem:HS-to-hardness}. The polynomial $Q_k(\vecz)$ has the following properties.
    \begin{itemize}
      \item $Q_k$ has individual degree $d < s^{B(k - \epsilon)/k}$, and total degree $< k \cdot s^{B(k-\epsilon)/k}$.
      \item $Q_k$ is not computable by formulas of size $s^{B}$.
      \item $Q_k$ has a formula of size $\leq (kd) \cdot s^{B(k-\epsilon)}$.
    \end{itemize}

    \item[Building the NW design:]
       Using~\expref{Lemma}{lem:NW-construction}, we now construct an $(\ell,k,r)$ design $\set{S_1,\ldots,S_{m}}$ with  $m := \inparen{\frac{\ell}{k}}^r = \inparen{a k^{(4-1)}}^2 = a^2 k^6$.

    \item[Variable reduction:]
    Let $P(x_1,\ldots,x_{m})$ be a nonzero $m$-variate degree-$s$ polynomial computable by a formula of size~$s$, and let $P(\NWify{Q_k}{\ell,k,r}) \equiv 0$.
    Then, from the `moreover' part of \expref{Lemma}{lem:KI-without-Kaltofen} (since $r = 2$), we get  that there is a polynomial $\tilde{P}(z_1,\ldots,z_k)$ that vanishes on a hitting set for formulas of size $s^{B}$ and degree $s^{B}$, and is computable by a formula of size at most
    \begin{align*}
      \operatorname{size}(\tilde{P}) & \leq 4 \cdot s \cdot d \cdot (s+1)  \\
                                     &\leq 4 s(s+1) \cdot s^{B(k-\epsilon)/k}\\
                                              &\leq s^{\inparen{5 + \frac{B(k-\epsilon)}{k}}} =  s^{5 + \frac{5k}{\epsilon} - 5} = s^{B}.
    \end{align*}
    Moreover, note that the degree of $\tilde{P}(z_1,\ldots,z_k)$ is at most $(k \cdot d) \cdot s \leq s^{\inparen{2 + \frac{B(k-\epsilon)}{k}}} < s^B$. Since $\tilde{P}$ vanishes on the hitting set for formulas of size $s^{B}$ and degree $s^{B}$, we get a contradiction due to \expref{Observation}{obs:HS-for-multiples}. Therefore, it must be the case that $P(\NWify{Q_k}{\ell,k,r})$ is nonzero.%

    \item[Construction of the hitting set:]
      Therefore, starting with a nonzero formula of degree~$s$, size~$s$, over $m$ variables, we obtain a nonzero $\ell$-variate polynomial of degree at most $s \cdot (kd) \leq s^{B}$. At this point we can just use the trivial hitting set given by the
Polynomial Identity Lemma (\expref{Lemma}{lem:PIL})  %
which has size at most $s^{B \ell}$.

    Therefore, what remains to show is that our choice of parameters ensures that $B \ell < \frac{m}{50}$. This is true, as $\frac{m}{50} = \frac{a^2 k^6}{50} = \frac{ak}{50} \cdot a k^5 = \pfrac{5k}{\epsilon} \cdot \ell \cdot k > B \ell$.%
\end{description}
\noindent
The construction runs in time that is polynomial in the size of the hitting set in the conclusion, and the \emph{bit-size} of the points in it. See \expref{Subsection}{subsec:algo} for a more elaborate discussion.
\end{proof}

\bootstraplemma*

\begin{proof}
  The proofs of both parts follow the same template as in the proof of \expref{Lemma}{lem:step-zero} but with different parameter settings. Hence, we will defer the choices of the parameters $\ell,k,r$ towards the end to avoid further repeating the proof. For now, let $\ell,k,r$ be parameters that satisfy $r \leq k$,  $\ell = k^2$ and $5r\cdot g(n)\leq k$.

  \begin{description}
  \item[Constructing a hard polynomial:]
    The first step is to construct a polynomial $Q_k(z_1,\ldots, z_k)$ that vanishes on the hitting set for the class $\mathcal{C}(n,s^{5},s^{5})$, where\footnote{that is, $Q_k$ is a $k$-variate polynomial that is just masquerading as an $n$-variate polynomial that does not depend on the last $n-k$ variables.} $k \leq n$. This can be done by using \expref{Lemma}{lem:HS-to-hardness}. The polynomial $Q_k(\vecz)$ will therefore have the following properties.
    \begin{itemize}\itemsep 0pt
    \item $Q_k$ has individual degree~$d$ smaller than $s^{5 g(n)/k}$, and degree at most $k \cdot s^{5 g(n)/k}$.
    \item Computing $Q_k$ requires formulas of size more than $s^{5}$.
    \item $Q_k$ has a formula of size at most $s^{10 g(n)}$.
    \end{itemize}

  \item[Building the NW design:]
    Using the parameters $\ell$, $k$, $r$, and the construction from \expref{Lemma}{lem:NW-construction}, we now construct an $(\ell,k,r)$ design $\set{S_1,\ldots, S_{m}}$ with $m \leq k^{r}$.

  \item[Variable reduction using $Q_k$:]
    Let $P(x_1,\ldots, x_m) \in \mathcal{C}(m,s,s)$ be a nonzero polynomial. Suppose for contradiction that $P(\NWify{Q_k}{\ell,k,r}) \equiv 0$, then \expref{Lemma}{lem:KI-without-Kaltofen} states that there is a nonzero polynomial $\tilde{P}(z_1,\ldots, z_k)$ of degree at most $s \cdot k\cdot d$ such that
    $Q_k$ divides $\tilde{P}$, and that $\tilde{P}$ can be computed by a formula of size at most
    \begin{align*}
      s \cdot (r-1) \cdot d^{r} \cdot (s+1) & \leq s^4  \cdot d^{r}\\
                                       & \leq  s^4 \cdot s^{5r\cdot g(n)/k}\\
                               & \leq s^5. & \text{(since $k,r$ satisfy $5r \cdot g(n) \leq k$)}
    \end{align*}
    Furthermore, the degree of $\tilde{P}$ is at most $s \cdot r\cdot s^{5g(n)/k} \leq s^5$. Hence, $\tilde{P}$ is a polynomial in $k \leq n$ variables, of degree at most $s^5$ that vanishes on the hitting set of $\mathcal{C}(n,s^5,s^5)$ since $Q_k$ divides $\tilde{P}$.
    But then, \expref{Observation}{obs:HS-for-multiples} states that $\tilde{P}$ must require formulas of size more than $s^5$, contradicting the above size bound. Hence, it must be the case that $P(\NWify{Q_k}{\ell,k,r}) \not\equiv 0$.

  \item[Hitting set for $\mathcal{C}(m,s,s)$:] At this point, we set the parameters $k$ and $r$ depending on how quickly $g(n)$ grows.
    \begin{description}
    \item[Part (1) $\inparen{g(n) \leq \frac{n}{50}}$:]

      In this case, we choose $k = n$ and $r = 10$ (so we satisfy $5r\cdot  g(n) \leq n = k$). From \expref{Lemma}{lem:NW-construction}, we have an explicit $(\ell,k,r)$ design $\set{S_1,\ldots, S_m}$ with $m = k^{r} = n^{10}$.

      For any  nonzero $P \in \mathcal{C}(m,s,s)$, we have that $P(\NWify{Q_k}{\ell,k,r})$ is a nonzero $\ell$-variate polynomial of degree at most $s \cdot k \cdot s^{5 g(n) / k} \leq s^3$. Hence, by just using the trivial hitting set via the
Polynomial Identity Lemma (\expref{Lemma}{lem:PIL})  %
we have an explicit hitting set of size $s^{3 \ell} \leq s^{3 m^{\frac{1}{5}}}$. Since $m \geq n_0$ and $n_0$ is large enough, we have that
      \[
        h(m) := 3 m^{\frac{1}{5}} \leq \pfrac{1}{10} \cdot m^{\frac{1}{4}}.
      \]
    \item[Part (2) $\inparen{g(n) \leq \pfrac{1}{10} n^{\frac{1}{4}}}$:] In this case, we choose $k = \sqrt{n}$ and $r = n^{\frac{1}{4}}$, so that  $5r\cdot g(n) \leq 10r \cdot g(n) \leq k$ and $\ell = n$.
    Using \expref{Lemma}{lem:NW-construction}, we now construct an explicit $(\ell,k,r)$ design $\set{S_1,\ldots, S_m}$ with $m = 2^{n^{\frac{1}{4}}} \leq k^{r}$.

      We have a formula computing the $n$-variate polynomial $P(\NWify{Q_k}{\ell,k,r})$ of size at most $s \cdot s^{10g(n)} \leq s^{20 g(n)} =: s'$. Using the hypothesis for hitting sets for $\mathcal{C}(n,s',s')$, we have an explicit hitting set for $\mathcal{C}(m,s,s)$ of size at most
      \[
        \inparen{s'}^{g(n)}  = s^{20 g(n)^2} = s^{h(m)},
      \]
      where $h(m) = 20 \inparen{g((\log m)^4)}^2$. Since $n_0$ is large enough, we have that
      \begin{align*}
        10 \cdot h(m) & \leq 20 \cdot 10 \cdot \inparen{g((\log m)^4)}^2\\
                     & \leq 2 \inparen{\log m}^2 & \text{(since $g(n) \leq \pfrac{1}{10} n^{\frac{1}{4}}$)}\\
                     & \leq m^{\frac{1}{4}}. & \text{(since $m \geq n_0$ and $n_0$ is large enough)}
      \end{align*}
    \end{description}
  \end{description}
    This completes the proof of both parts of the lemma.
\end{proof}

\section{Finer analysis of the main result}
\label{sec:finer-analysis}

This section addresses some of the subtle aspects about \expref{Theorem}{thm:mainTheorem-2} as follows.
First, we provide an algorithm outlining all the steps in obtaining the final hitting set from the initial one, which also helps us illustrate the explicitness of all the intermediate hitting sets (including the corresponding bit complexities).
We then discuss how our theorem statement changes if the hypothesis holds only when the number of variables is a growing function of the size/degree~$s$.

\subsection{Algorithm for generating the hitting set}
\label{subsec:algo}

We now give an algorithm to generate an explicit hitting set for $\mathcal{C}(s,s,s)$, for all large $s$, using the hypothesis of \expref{Theorem}{thm:mainTheorem-2}. Let $n_0$ be the initial threshold from the hypothesis and let $n_1$ be a constant that satisfies the ``large enough'' requirements of \expref{Lemma}{lem:main-bootstrap}.

\vskip -0.3in

\begin{align*}
  n_0 & \geq 2 & t_0 & = (n_0 - \epsilon)\\
  n_1 & \text{ is large enough} & t_1 &= \frac{n_1}{50}\\
  n_2 & = n_1^{10} & t_2 &= \inparen{\frac{1}{10}} n_2^{\frac{1}{4}}\\
  \text{For all $i \geq 3$,} \quad n_i & = 2^{n_{i-1}^{\frac{1}{4}}} & t_i &:= 20 t_{i-1}^2
\end{align*}
We are provided an algorithm $\textsc{Initial-Hitting-Set}(s)$ that outputs a hitting set for $\mathcal{C}(n_0,s,s)$ of size at most $s^{n_0 - \epsilon}$. \expref{Algorithm}{alg:final-algorithm} describes a function \textsc{Hitting-Set} which, given inputs $i$ and $s$, outputs a hitting set for $\mathcal{C}(n_i,s,s)$ of size at most $s^{t_i}$ in time $\poly(s^{t_i})$.

\begin{algorithm}
  \caption{\textsc{Hitting-Set}}
  \label{alg:final-algorithm}

  \DontPrintSemicolon
  \SetKwInOut{Input}{Input}\SetKwInOut{Output}{Output}
  \Input{~Parameter $i$ and a size $s$.}
  \Output{~A hitting set of size $s^{t_i}$ size for $\mathcal{C}(n_i,s,s)$.}

  \BlankLine

  \If{$i = 1$}{
    Let $A = \max\inparen{n_0,\frac{250}{\epsilon}}$, $B = \frac{3n_0}{\epsilon}$\;
    Let $H_{0}(s^B) := \textsc{Initial-Hitting-Set}(s^B)$ \tcp*[r]{size at most $s^{B(n_0 - \epsilon)}$}
    \BlankLine
    Compute a nonzero polynomial $Q$ in $k = n_{0}$ variables of individual degree smaller than $s^{B t_{0} / k}$ that vanishes on $H_{0}(s^B)$. \tcp*[r]{takes $\poly(s^{Bt_{0}})$ time}
    Compute an $(A \cdot n_{0}^4,n_{0},2)$-design $\set{S_1,\ldots, S_{n_1}}.$\;
    \BlankLine
    Let $S \subseteq \F$ be of size at least $s^B$.\;
    \Return{$\setdef{\inparen{\NWify{Q}{\ell,k,r}}(\veca)}{\veca \in S^{A n_0^5}}$}\tcp*[r]{size at most $s^{B A n_0^5} \leq  s^{\frac{n_1}{50}} = s^{t_1}$}
  }
  \ElseIf{$i = 2$}{
    $H_{1}(s^5) := \textsc{Hitting-Set}(1,s^5)$\tcp*[r]{size at most $s^{5t_{1}}$}
    \BlankLine
    Compute a nonzero polynomial $Q$ in $k = n_{1}$ variables of individual degree smaller than $s^{5 t_{1} / k}$ that vanishes on $H_{1}(s^5)$. \tcp*[r]{takes $\poly(s^{5t_{1}})$ time}
    Compute an $(n_{1}^2,n_{1},10)$-design $\set{S_1,\ldots, S_{n_2}}.$\;
    \BlankLine
    Let $S \subseteq \F$ be of size at least $s^3$.\;
    \Return{$\setdef{\inparen{\NWify{Q}{\ell,k,r}}(\veca)}{\veca \in S^{n_{1}^2}}$}\tcp*[r]{size at most $s^{3n_1^2} \leq s^{0.1 \cdot n_2^{\frac{1}{4}}} = s^{t_2}$}
  }
  \ElseIf{$i \geq 3$}{
    $H_{i-1}(s^5) := \textsc{Hitting-Set}(i-1,s^5)$\tcp*[r]{size at most $s^{5t_{i-1}}$}
    \BlankLine
    Compute a nonzero polynomial $Q$ in $k = \sqrt{n_{i-1}}$ variables of individual degree smaller than $s^{5 t_{i-1} / k}$ that vanishes on $H_{i-1}(s^5)$. \tcp*[r]{takes $\poly(s^{5t_{i-1}})$ time}
    Compute an $(n_{i-1},\sqrt{n_{i-1}},\sqrt[4]{n_{i-1}})$-design $\set{S_1,\ldots, S_{n_i}}$\;
    \BlankLine
    $H_{i-1}(s^{20t_{i-1}}) := \textsc{Hitting-Set}(i-1,s^{20t_{i-1}})$\tcp*[r]{size at most $s^{20t_{i-1}^2}$}
    \Return{$\setdef{\inparen{\NWify{Q}{\ell,k,r}}(\veca)}{\veca \in H_{i-1}(s^{20 t_{i-1}})}$}\tcp*[r]{size at most $s^{20t_{i-1}^2} = s^{t_i}$}
  }
\end{algorithm}

From the growth of $n_i$, it follows that $n_b \geq s$ for $b = O(\log^*s)$ and $t_i = 20^{2^i-1} t_0^{2^i}$. Unfolding the recursion for $\textsc{Hitting-Set}(j,s)$, for any $j$, the algorithm makes at most $2^j$ calls to $\textsc{Initial-Hitting-Set}(s')$ for various sizes $s'$ satisfying
\[
  s' \leq s^{B \cdot 20^{j-1} \prod_{i=1}^{j-1} t_i} \leq s^{B t_{j-1}^2} = s^{O(t_j)}.
\]
Thus for $\textsc{Hitting-Set}(b,s)$, the algorithm makes at most $2^b$ calls to $\textsc{Initial-Hitting-Set}(s')$, for sizes $s'$ that are at most $s^{\exp(\exp(O(\log^*s)))}$. The overall running time is polynomial time in the size of the final hitting set which is $s^{t_b} = s^{\exp(\exp(O(\log^*s)))}$.

\paragraph{Bit complexity of the hitting sets.}
\label{para:bit-size}

We will now discuss the bit complexity of the hitting sets that are generated during the bootstrapping procedure. We will analyze \expref{Algorithm}{alg:final-algorithm} and show that any hitting set $H$ for $n$-variate formulas that the algorithm outputs, will have a bit complexity that is at most $\abs{H}^{f(n)}$, for $f(n) = \exp(O(\log^{\ast}n))$.

Let us first consider the case when $i \geq 3$. Suppose each evaluation point in $S_0 := H_{i-1} (s^5)$ is at most $h^{a}$ bits long, where $h = \abs{S_0}$ and $a=f(n_{i-1})$.
Using \expref{Remark}{rmk:hard-poly-bit-size}, we get that each coefficient of $Q$ is at most $h^{c a}$ bits long, for some other constant $c$. The output of \expref{Algorithm}{alg:final-algorithm} for this case will be evaluations of $Q$ on $S_1 := H_{i-1} (s^{20t_{i-1}})$.
Since $\abs{S_1} = h^{4 t_{i-1}}$, the bit complexity of $S_1$ is at most $h^{4 a t_{i-1}}$. Now $Q$ is a degree $< s^5$ polynomial with $O(h)$ monomials.
As a result any evaluation of $Q$ on a point from $S_1$ will have at most $O(\log{h}) \cdot \inparen{h^{c a} + s^5 \cdot h^{4 a t_{i-1}}} \leq h^{2a \cdot (4t_{i-1})} = \abs{S_1}^{2a}$, as $t_{i-1}$ is large enough. Since $n_{i} = 2^{n_{i-1}^{1/4}}$, we have that $f(n) = \exp(O(\log^{\ast}n))$.
For the cases when $i=1$ or $i=2$, since we use the trivial hitting sets in place of $S_1$ in the above discussion, the same bounds will continue to hold.

We want to remark that although the bit complexity of the output of $\textsc{Hitting-Set}(i,s)$ is not polynomial in the size $s^{t_i}$, the exponent $t_i$ is always larger than $f(n_i)$. As a result the time taken to generate the final hitting set in the conclusion of \expref{Theorem}{thm:mainTheorem-2} can still be bounded by $s^{\exp(\exp(O(\log^{\ast}s)))}$, albeit with a slightly larger constant in $O(\log^{\ast}s)$.

\subsection{Bootstrapping with a slightly weaker hypothesis}
\label{sec:growing n}

We now discuss the analogue of \expref{Theorem}{thm:mainTheorem-2} when the number of variables $n$ in the hypothesis grows with the size parameter $s$.
In particular, the hypothesis now guarantees a hitting set that is better than the brute force hitting set of size $(s+1)^n $ only when $n$ grows faster than a certain function of $s$. For example, suppose that for all large $n$ and $s$, the class $\mathcal{C}(n,s,s)$ has hitting sets of size\footnote{Here $\ilog{i}$ denotes $i$ iterated applications of $\log$, \ie{} $\ilog{3}s = \log\log\log s$.} at most $s^{n^{\delta}(\ilog{3}s)^{\delta}} $ for some small constant $\delta$. 
Note that this hypothesis is not ``helpful'' when $n$ is a constant, or even $(\ilog{3}s)^{o(1)} $.
Nevertheless, we shall show that even this scenario allows for a very similar bootstrapping procedure, and leads to hitting sets of size $s^{\poly(\ilog{2}s)}$ for the class $\mathcal{C}(s,s,s)$.

We now state the analogue of the ``single-step lemma'' (\expref{Lemma}{lem:main-bootstrap}) that helps us in proving our main theorem of this section (\expref{Theorem}{thm:bootstrapping-growing-s}, which is proved at the end of the section). 

\begin{lemma}\label{lem:bootstrapping-growing-s}
  Let $g,t:\N\rightarrow \N$ be non-decreasing functions such that $g(k) \leq k^{1/4}$ and $t(x^{\log x}) \leq 2 t(x)$ for all $x$. Let $1\leq m\leq s$ such that 
  \begin{equation}\label{eqn:bootstrapping-growing-s-constraint}
    20 \cdot g\inparen{(\log m)^4} \cdot t(s) \leq \log m.
  \end{equation}

  \noindent

  Let $(k_1, s_1) = \inparen{(\log m)^4,  s^5}$ and $(k_2, s_2) = \inparen{(\log m)^4, s^{100 g((\log m)^2) t(s)}}$. Suppose $\mathcal{C}(k_1,s_1, s_1)$ and $\mathcal{C}(k_2, s_2, s_2)$ have explicit hitting sets of size at most $s_1^{g(k_1) t(s_1)}$ and $s_2^{g(k_2)t(s_2)}$, respectively. 
  
  Then, we have an explicit hitting set for $\mathcal{C}(m,s,s)$ of size at most $s^{g'(m) t'(s)}$, where
  \begin{align*}
    g'(m) & \leq 400 \cdot \inparen{g\inparen{(\log m)^4}}^2\\
    t'(s) & \leq t(s)^2.
  \end{align*}
\end{lemma}
\begin{proof}
  The proof is exactly along similar lines, with just a little more care to set parameters appropriately. 

  Fix any $m,s$ that satisfy \expeqref{equation}{eqn:bootstrapping-growing-s-constraint}.
  Set $r = \log m$, $k = (\log m)^2$ and $\ell = (\log m)^4$. With these parameters, \expeqref{equation}{eqn:bootstrapping-growing-s-constraint} simplifies to $20 \cdot g(\ell) \cdot t(s) \leq r$. 

Let $S_1,\ldots, S_m$ be an $(\ell,k,r)$ design, as guaranteed by \expref{Lemma}{lem:NW-construction}.
Let $s_1 = s^{5}$.
By the hypothesis for $(k_1=\ell,s_1)$ , there is an explicit hitting set $H$ of size at most $s_1^{g(\ell)t(s_1)} \leq s^{5 g(\ell) \cdot 2\cdot t(s)} \leq s^{r}$  for $\mathcal{C}(k,s_1,s_1)$.
Let $Q(z_1,\ldots, z_k)$ be the explicit polynomial of individual degree at most $s^{10g(\ell)t(s)/k} \leq s^{r/k}$, as guaranteed by \expref{Lemma}{lem:HS-to-hardness}, that vanishes on the set $H$.
Therefore, $Q_k$ satisfies the following properties:
  \begin{itemize}\itemsep 0pt
  \item $Q_k$ has individual degree less than $s^{r/k}$ and total degree at most $k \cdot s^{r/k}$. 
  \item $Q_k$ can be computed by formulas of size at most  $s^{20g(\ell)t(s)} \leq s^{r}$.

  \end{itemize}
  
  Now, suppose $P(x_1,\ldots, x_m)$ is any nonzero polynomial in $\mathcal{C}(m,s,s)$ but $P(\NWify{Q_k}{\ell,k,r}) \equiv 0$, then \expref{Lemma}{lem:KI-without-Kaltofen} states that there is a nonzero polynomial $\tilde{P}(z_1,\ldots, z_k)$ such that
  \begin{itemize}
  \item $\tilde{P}$ has degree at most $s^3$, and is a multiple of $Q_k$
  \item $\tilde{P}$ is computed by formulas of size at most $s^3 \cdot s^{r^2/k} = s^4$
  \end{itemize}
  However, \expref{Observation}{obs:HS-for-multiples} asserts that $\tilde{P}$ must require formulas of size at least $s^{5}$ and hence it must be the case that $P(\NWify{Q_k}{\ell,k,r}) \not\equiv 0$.

  \medskip
  
  Let $P'(z_1,\ldots, z_\ell) = P(\NWify{Q_k}{\ell,k,r})$.
This is an $\ell$-variate nonzero polynomial of degree at most $s \cdot \deg(Q_k) \leq s^3$ and is computable by formulas of size $s_2 = s \cdot s^{20g(\ell)t(s)} \leq s^{20g(\ell)t(s)}$.
Since $20 g(\ell) t(s) \leq \log m \leq \log s$, we have that $t(s_2) \leq 2 t(s)$.
Using the hypothesis for $(k_2=\ell, s_2)$ an explicit hitting set for $\mathcal{C}(\ell, s_2, s_2)$ of size at most $s_2^{g(\ell)t(s_2)}$ which can be bounded above as follows:
  \begin{align*}
    s_2^{g(\ell) t(s_2)}   & \leq s_2^{2\cdot g(\ell) t(s)}\\
                       & \leq s^{20 \cdot g(\ell) t(s) \cdot 2\cdot g(\ell) \cdot t(s)} \\
                       & \leq s^{400 \cdot g(\ell)^2 \cdot t(s)^2} \leq s^{g'(m) t'(s)}. 
  \end{align*}
  Therefore, we have an explicit hitting set for $\mathcal{C}(m,s,s)$ of size at most $s^{g'(m) t'(s)}$.
\end{proof}

Intuitively, the above lemma states that if $g(k) t(s)$ is substantially smaller than $k$, then explicit hitting sets of size $s^{g(k)t(s)}$ for $\mathcal{C}(k,s,s)$ can be used to obtain an $s^{g'(m) t'(s)}$ hitting set for $\mathcal{C}(m,s,s)$ where $m$ is exponentially larger than $k$ (if $s$ is not too large in comparison to $m$).
Suppose $g'(m)t'(s)$ is also substantially smaller than $m$ (which would roughly translate to $g(k) t(s)$ being \emph{much} smaller than $k$; say something like $\log\log k$), then we may be in a position to use the lemma again to get an even better hitting set.

The following theorem sets up the parameters suitably when we are in a position to use the above lemma multiple times. 

\begin{theorem}\label{thm:bootstrapping-growing-s}
  Let $g_0, t_0: \N \rightarrow \N$ be non-decreasing functions with $g_0(m) \leq m^{1/4}$ and $t_0(s)\leq \log s$.
Let $g_1,g_2,\ldots:\N\rightarrow \N$ and $t_1,t_2,\ldots:\N\rightarrow \N$ be defined as $g_{i+1}(m) = 400 \cdot \inparen{g_i\inparen{(\log m)^4}}^2$ and $t_{i+1}(s) = \inparen{t(s)}^2$.
Furthermore, let $L_0:\N\rightarrow \N$ as $L_1(x) = (\log x)^2$ and $L_{i+1}(x) = (L_{i}((\log x)^4))^2$ for all $i \geq 1$.

  \medskip

  Suppose for every $m,s \geq 1$ we have an explicit hitting set for $\mathcal{C}(m,s,s)$ of size at most $s^{g_0(m) \cdot t_0(s)}$. Let $r_{m,s} \geq 1$ be the largest index such that
  \begin{itemize}\itemsep 0pt
  \item $g_r(x) \leq x^{1/4}$  and $t_r(x^{\log x}) \leq 2\cdot t_r(x)$ for all $x\geq 1$,
  \item $100 \cdot g_r(m) \cdot t_r(s) \leq L_r(m)$. 
  \end{itemize}
  Then for any $j \leq r_{m,s}$, we have an explicit hitting set for $\mathcal{C}(m,s,s)$ of size at most $s^{g_j(m)t_j(s)}$.

  In particular, $\mathcal{C}(s,s,s)$ has an explicit hitting set of size $s^{L_{r}(s)}$ for $r = r_{s,s}$. 
\end{theorem}

Although the above theorem is stated in a general form, it would be instructive to just consider \emph{nice} functions of the form $g_0(s) = s^{0.1}$. In this case, it can be seen that
\begin{align*}
  g_i(s) & = 2^{2^{O(i)}} \cdot \poly(\ilog{i} s),\\
  t_i(s) & = (\ilog{3} s)^{2^i}, \text{ and}\\
  L_i(s) & = 2^{2^{O(i)}} \cdot \poly(\ilog{i} s),
\end{align*}
where $g_1(s) \ll L_1(s)$ but, certainly for $i = \log^*s$ we have $g_i(s) > L_i(s)$. If $t_0$ is the constant function, then we are essentially in the same regime as in the previous sections, so we get $r_{s,s} = \Theta(\log^*s)$, and we recover \expref{Theorem}{thm:mainTheorem-2}. However, if suppose $t_0(s) = \ilog{3}s$ (where $\ilog{i}s$ is the iterated logarithm function, \ie{} $\ilog{3}s = \log\log\log s$), then $t_3(s) = (\ilog{3}s)^{2^3} > L_3(s)$ and hence $r_{s,s} = 2$. Nevertheless, the above theorem yields an explicit hitting set of size $\mathcal{C}(s,s,s)$ of size at most $s^{\poly\ilog{2} s}$ which is still substantially better than $s^{g_0(s)t_0(s)} > s^{s^{0.1}}$.

\begin{proof}
  We will prove the theorem by induction on $j$. For any $m,s$, the base case of $j=0$ trivially satisfied by the  hypothesis of the theorem and hence there is nothing to prove.

  \medskip

  \noindent
  {\bf Inductive step ($j\rightarrow j+1$):} Suppose $g_{j+1}(x) \leq x^{1/4}$ and $t_{j+1}(x^{\log x}) \leq 2\cdot t_{j+1}(x)$ for all $x\geq 1$.
For each $i \leq j+1$, define $\mathcal{D}_i = \setdef{(m,s)}{m\leq s \;,\; 100 \cdot g_i(m) t_i(s) \leq L_i(m)}$.
By the induction hypothesis, for every $(k,s') \in \mathcal{D}_j$ we have an explicit hitting set for $\mathcal{C}(k,s',s')$ of size at most $s'^{g_j(k) t_j(s')}$.

  \medskip

  Fix any $(m,s) \in \mathcal{D}_{j+1}$. To construct the explicit hitting set for $\mathcal{C}(m,s,s)$, we will use \expref{Lemma}{lem:bootstrapping-growing-s}. In order to do that, we need to assert the following  claims:
  \begin{enumerate}[(a)]\itemsep 0pt
  \item\label{item:bs-growing-s-claim-1} $(m,s)$ satisfy \expeqref{equation}{eqn:bootstrapping-growing-s-constraint} (for $g = g_j$ and $t = t_j$),
  \item\label{item:bs-growing-s-claim-2} $\mathcal{C}(\ell,s_1,s_1)$, for $\ell = (\log m)^4$ and $s = s^5$, has an explicit hitting set of size at most $s_1^{g_{j}(\ell) t_j(s)}$, and
  \item\label{item:bs-growing-s-claim-3} $\mathcal{C}(\ell,s_2,s_2)$, for $\ell = (\log m)^4$ and $s_2 = s^{20 g_j(\ell) t_j(s)}$, has an explicit hitting set of size at most $s_2^{g_j(\ell) t_j(s_2)}$. 
  \end{enumerate}

  \medskip
  
  \noindent
  \textit{Pf of \expeqref{Claim}{item:bs-growing-s-claim-1}:}  $(m,s) \in \mathcal{D}_{j+1}$ implies that
  \begin{align*}
    20 \cdot g_j((\log m)^4) \cdot t_j(s) & = 20 \cdot \inparen{\frac{g_{j+1}(m)}{400}}^{1/2} \cdot \inparen{t_{j+1}(m)}^{1/2}\\
                                           & = \inparen{g_{j+1}(m) t_{j+1}(s)}^{1/2}\\
                                           & \leq \frac{1}{10} \cdot L_{j+1}(m)^{1/2} \leq \log m.
  \end{align*}

  \medskip
  
  \noindent
  \textit{Pf of \expeqref{Claim}{item:bs-growing-s-claim-2}:}
  If $\ell = (\log m)^4$ and $s_1 = s^5$,
  \begin{align*}
    100 \cdot g_j(\ell) t_j(s_1) & \leq 200 \cdot  g_j(\ell) t_j(s)\\
                       & = 10 \cdot \inparen{g_{j+1}(m)\cdot t_{j+1}(s)}^{1/2}\\
                       & \leq \frac{10}{10} \cdot L_{j+1}(m)^{1/2}\\
                       & = L_j(\ell).
  \end{align*}
  Hence $(\ell, s_1) \in \mathcal{D}_j$ and by the inductive hypothesis, we have an explicit hitting set for $\mathcal{C}(\ell, s_1, s_1)$ of size at most $s_1^{g_j(\ell)t_j(s_1)}$.

  \medskip
  
  \noindent
  \textit{Pf of \expeqref{Claim}{item:bs-growing-s-claim-3}:} For $\ell = (\log m)^4$ and $s_2 = s^{20 g_j(\ell) t_j(s)}$, note that \expeqref{Claim}{item:bs-growing-s-claim-1} shows that $20 g_j(\ell) t_j(s) \leq \log m \leq \log s$ and hence $t_j(s_2) \leq 2 t_j(s)$.
Therefore,
  \begin{align*}
    100 \cdot g_j(\ell) t_j(s_2) & \leq 200 \cdot g_j(\ell) t_j(s) \\
                                & = 10 \cdot \inparen{g_{j+1}(m) \cdot t_{j+1}(s)}^{1/2}\\
                                & = \inparen{L_{j+1}(m)}^{1/2}  \leq L_j(\ell).
  \end{align*}
  Hence $(\ell, s_2) \in \mathcal{D}_{j}$ and, by the induction hypothesis, we have an explicit hitting set for $\mathcal{C}(\ell, s_2, s_2)$ of size at most $s_2^{g_j(\ell) t_j(s_2)}$.

  Using \expref{Lemma}{lem:bootstrapping-growing-s}, we have that $\mathcal{C}(m,s,s)$ has an explicit hitting set of size $s^{g_{j+1}(m) t_{j+1}(s)}$.
\end{proof}

\section{Open problems}
We conclude with some open questions.
\begin{itemize}

  \item A natural question in the spirit of the results in this paper, and those in \cite{AGS19} seems to be the following: Can we hope to bootstrap lower bounds?
  In particular, can we hope to start from a mildly non-trivial lower bound for general algebraic circuits (\eg,
  superlinear or just superpolynomial), and hope to amplify it to get a stronger lower bound (superpolynomial or truly exponential, respectively).
  In the context of non-commutative algebraic circuits, Carmosino, Impagliazzo, Lovett and Mihajlin~\cite{CILM18} recently showed such results, but no such result appears to be known for commutative algebraic circuits.

  \item It would also be interesting to understand if it is possible to bootstrap white box PIT algorithms. The proofs in this paper strongly rely on the fact that we have a non-trivial blackbox derandomization of PIT.

  \item Kabanets and Impagliazzo~\cite{KI04} show that given a polynomial family which requires %
  exponential-size   %
  circuits, there is a blackbox PIT algorithm which runs in quasipolynomial time ($\exp(\poly(\log n))$) on circuits of size $\poly(n)$ and degree $\poly(n)$, where $n$ is the number of variables. Thus, even with the best hardness possible, the running time of the PIT algorithm obtained is still no better than  quasipolynomially bounded. The question is to improve this running time to get a better upper bound than that obtained in~\cite{KI04}. In particular, can we hope to get a deterministic polynomial-time PIT assuming that we have explicit polynomial families of exponential hardness. This seems to be closely related to the question about bootstrapping lower bounds.

\end{itemize}

Since the first version of this paper appeared, there has been some work related to the problems studied in this paper.  Specifically, Guo, Kumar, Saptharishi and Solomon~\cite{GKSS19} showed that  it is possible to bootstrap even explicit hitting sets of size $s^{n} - 1$ for $n$-variate 
circuits of size~$s$ and degree~$s$, 
for a constant $n$, to obtain polynomial-time PIT for $s$-variate circuits of size and degree~$s$. The proof in \cite{GKSS19} goes via the direct construction of a hitting set generator with constant seed length assuming a sufficiently strong hardness assumption, thereby making progress on the last open question state above as well. One caveat of the results in \cite{GKSS19} is that the underlying field needs to be of sufficiently large characteristic or characteristic zero. Proving analogous results over fields of small  characteristic continues to be a very interesting open problem.  

\paragraph{Acknowledgments:}

Ramprasad and Anamay would like to thank the organizers of the Workshop on Algebraic Complexity Theory (WACT 2018) where we first started addressing this problem. Ramprasad and Anamay would also like to thank Suhail Sherif and  Srikanth Srinivasan for making an observation which led to the strengthening of an older version of this paper.
Mrinal is thankful to Michael Forbes, Josh Grochow and Madhu Sudan for insightful discussions.

We also thank the anonymous reviewers at Theory of Computing for valuable feedback. The discussion in \expref{subsection}{sec:growing n} in particular was a result of a query from one of the reviewers on whether an analogue of \expref{Theorem}{thm:mainTheorem-2} continues to hold when the parameter $n$ in the hypothesis is growing with $s$.

Ramprasad and Anamay would also like to acknowledge the support from the Department of Atomic Energy, Government of India, under project number RTI4001.

\bibliographystyle{tocplain}
\bibliography{bibstrings,v019a012,bibtail}

\begin{tocauthors}
\begin{tocinfo}[mrinal]
Mrinal Kumar\\
Reader\\
School of Technology and Computer Science\\
Tata Institute of Fundamental Research, Mumbai\\
mrinal\tocat{}tifr\tocdot{}res\tocdot{}in\\
\url{https://mrinalkr.bitbucket.io/}
\end{tocinfo}
\begin{tocinfo}[ramprasad]
Ramprasad Saptharishi\\
Associate Professor\\
School of Technology and Computer Science, TIFR, Mumbai\\
ramprasad\tocat{}tifr\tocdot{}res\tocdot{}in \\
\url{https://www.tifr.res.in/~ramprasad/}
\end{tocinfo}
\begin{tocinfo}[anamay]
Anamay Tengse\\
Postdoctoral Student\\
Efi Arazi School of Computer Science, Reichman Universiry, Herzliya\\
anamay\tocdot{}tengse\tocat{}gmail\tocdot{}com \\
\url{https://anamay.bitbucket.io/}
\end{tocinfo}
\end{tocauthors}

\begin{tocaboutauthors}

\begin{tocabout}[mrinal]
\textsc{Mrinal Kumar} is a faculty member in the School of Technology and Computer Science at the Tata Institute of Fundamental Research (TIFR), Mumbai. Before moving to TIFR in 2022, he spent a few years as an assistant professor in the Department of Computer Science and Engineering at IIT Bombay. He is generally interested in questions in Computational Complexity, Algebraic Complexity and Coding Theory. He did his \phd\ at Rutgers, where he was advised by Swastik Kopparty and Shubhangi Saraf, and spent his postdoctoral years at a convex combination of Harvard, the Simons Institute and the University of Toronto before moving to Mumbai in 2019. Much before that, he grew up in the then small town of Deoghar in the state of Jharkhand in India, and for the first fifteen years of his life dreamed of growing up to play cricket for a living. His interest in math was kindled during the course of his interaction with Dr. Jugal Kishore Singh, his math teacher in high school. Dr. Singh convinced him (and half his teammates in the cricket team!) that math could be interesting, challenging and yet accessible. His unconventional lectures were full of delightful anecdotes, fantastic questions and quite remarkably, even his exams were enjoyable, non-scary and yet interesting. Perhaps most importantly, he encouraged his students to think independently, which many of his former students believe is the most important thing that they ever learnt in school.     
\end{tocabout}

\begin{tocabout}[ramprasad]
\textsc{Ramprasad Saptharishi} is a faculty member in the School of Technology and Computer Science at the Tata Institute of Fundamental Research, Mumbai, and is generally interested in most questions of an algebraic nature. He did his \phd\ at Chennai Mathematical Institute, advised by Prof. Manindra Agrawal, and spent a couple of years at 
Microsoft Research, India and Tel Aviv University for his postdoc. His other interests include board games, reading ridiculously long web-series and writing software code. However, writing reasonable biographies continues to remain outside his area of expertise. 
\end{tocabout}

\begin{tocabout}[anamay]
\textsc{Anamay Tengse} is currently a postdoc at %
Reichman University in Herzliya, and was a postdoc at the University of Haifa for a couple of years before that.
This work was done when he was a \phd\ student in the School of Technology and Computer Science at TIFR, Mumbai, advised by Ramprasad Saptharishi.
Prior to that, he obtained an M.\,Tech. from the Department of Computer Science and Engineering at IIT Bombay, where he first got interested in the theoretical aspects of computing.
His interest in research comes largely from his interactions with the faculty and his friends from Goa College of Engineering, where he did his undergrad.
He spends a dangerously large chunk of his time indulging in music, cricket, quizzes on sporcle, trying to understand poetry and linguistics, and justifying his lack of interest in football (soccer) to people who question his Goan origin.
\end{tocabout}

\end{tocaboutauthors}

\end{document}